\theoremstyle{definition}
\newtheorem{theorem}{Theorem}
\newtheorem{definition}[theorem]{Definition}
\newtheorem{lemma}[theorem]{Lemma}
\newtheorem{example}[theorem]{Example}
\newtheorem{corollary}[theorem]{Corollary}
\title{A process algebra with global variables}
\author{
	Mark Bouwman \qquad\quad Bas Luttik \qquad \qquad Wouter Schols \qquad \qquad Tim A.C. Willemse\phantom{tt}
	\institute{Eindhoven University of Technology\\
		Eindhoven, The Netherlands}
	\email{m.s.bouwman@tue.nl \quad s.p.luttik@tue.nl\quad w.r.m.schols@student.tue.nl  \quad t.a.c.willemse@tue.nl}
}
\newcommand{\bisim}[1]{~\ensuremath{\underline{\leftrightarrow}_{#1}}~}
\newcommand{\bisimR}[1]{\ensuremath{\mathcal{R}_{#1}}}
\newcommand{\setOp}[1]{\ensuremath{\downarrow(#1)}}
\newcommand{\diamondOp}[1]{\ensuremath{\langle #1 \rangle}}
\newcommand{\boxOp}[1]{\ensuremath{[#1]}}
\newcommand{\ds}[2][]{\ensuremath{\mathalpha{\llbracket{#2}\rrbracket_{#1}}}}
\newcommand{\Var}{\ensuremath{Var}}
\newcommand{\Val}{\ensuremath{\mathcal{V}}}
\newcommand{\Act}{\ensuremath{Act}}
\newcommand{\TL}{\ensuremath{TL}}
\newcommand{\PN}{\ensuremath{PN}}
\newcommand{\multiset}[1]{\ensuremath{\Lbag #1 \Rbag}}
\newcommand{\pref}[1][\lambda]{\ensuremath{\mathalpha{{#1}.}}}
\newcommand{\acmp}{\ensuremath{\mathbin{+}}}
\newcommand{\pcmp}{\ensuremath{\mathbin{\|}}}
\newcommand{\encaps}[1][H]{\ensuremath{\mathalpha{\partial_{#1}}}}
\newcommand{\allow}[1]{\ensuremath{\nabla_{#1}}}
\newcommand{\hide}[1]{\ensuremath{\tau_{#1}}}
\newcommand{\hideF}[1]{\ensuremath{\theta_{#1}}}
\newcommand{\comm}[1]{\ensuremath{\Gamma_{#1}}}
\newcommand{\commF}[1]{\ensuremath{\gamma_{#1}}}
\newcommand{\defeqn}{\ensuremath{\stackrel{\textrm{def}}{=}}}
\newcommand{\step}[1]{\ensuremath{\mathrel{\overset{#1}{\longrightarrow}}}}
\newcommand{\state}[2]{\ensuremath{\langle #1,#2 \rangle}}
\newcommand{\RPref}{\ensuremath{\textsc{(Pref)}}}
\newcommand{\RAsgn}{\ensuremath{\textsc{(Asgn)}}}
\newcommand{\RCon}{\ensuremath{\textsc{(Con)}}}
\newcommand{\RSum}{\ensuremath{\textsc{(Sum)}}}
\newcommand{\RSuml}[1][]{\ensuremath{\textsc{(Sum-l)}^{#1}}}
\newcommand{\RSumr}[1][]{\ensuremath{\textsc{(Sum-r)}^{#1}}}
\newcommand{\RAllow}{\ensuremath{\textsc{(Allow)}}}
\newcommand{\RHide}{\ensuremath{\textsc{(Hide)}}}
\newcommand{\RPar}{\ensuremath{\textsc{(Par)}}}
\newcommand{\RParl}{\ensuremath{\textsc{(Par-l)}}}
\newcommand{\RParr}{\ensuremath{\textsc{(Par-r)}}}
\newcommand{\RComm}{\ensuremath{\textsc{(Comm)}}}
\newcommand{\RRec}[1][]{\ensuremath{\textsc{(Rec)}^{#1}}}
\newcommand{\REnc}{\ensuremath{\textsc{(Enc)}}}
\begin{document}
\maketitle

\begin{abstract}
	In standard process algebra, parallel components do not share a common state and communicate through synchronisation. The advantage of this type of communication is that it facilitates compositional reasoning. For modelling and analysing systems in which parallel components operate on shared memory, however, the communication-through-synchronisation paradigm is sometimes less convenient. In this paper we study a process algebra with a notion of global variable. We also propose an extension of Hennessy-Milner logic with predicates to test and set the values of the global variables, and prove correspondence results between validity of formulas in the extended logic and stateless bisimilarity and between validity of formulas in the extended logic without the set operator and state-based bisimilarity. We shall also present a translation from the process algebra with global variables to a fragment of mCRL2 that preserves the validity of formulas in the extended Hennessy-Milner logic.
\end{abstract}

\section{Introduction}
Communication between parallel components in real world systems takes many forms: packets over a network, inter-process communication, communication via shared memory, communication over a bus, etcetera. Process algebras usually offer an abstract message passing feature. Not all forms of communication fit well in a message passing paradigm, in particular, global variables and other forms of shared memory do not fit in well. In some cases it would be desirable to have global variables as first class citizens. To illustrate this we introduce a small example.

\begin{example}\label{exm:car-intro}
	Consider a traffic light and a car approaching a junction. If the light is green the car performs an action $drive$ and moves past the junction. If the light is red the car performs an action $brake$ and stops. Once the traffic light is green again the car performs the action $drive$. The specification should result in the following LTS.
	\begin{figure}[H]
		\centering
		\begin{tikzpicture}[
		>=Stealth,
		shorten >=1pt,
		auto,
		node distance=4.8 cm,
		scale = 1,
		transform shape,
		state/.style={inner sep=5pt, draw=black, circle}
		]
		
		\node[initial,state,initial text=]  (start)                    {};
		\node[state]                        (redbefore)    [below=1 cm of start]   {};
		\node[state]                        (redstopped)    [right=of redbefore]   {};
		\node[state]                        (greenstopped)    [right=of start]   {};
		\node[state]                        (finishedgreen)    [right=of greenstopped]   {};
		\node[state]                        (finishedred)    [right=of redstopped]   {};
		
		\path[->] (start) edge [bend left=15] node    {$drive$}           (finishedgreen)
		(start) edge [bend left=10]  node    {$change_{red}$}           (redbefore)
		(redbefore) edge node    {$brake$}           (redstopped)
		(redbefore) edge [bend left=10] node  {$change_{green}$}           (start)
		(redstopped) edge [bend left=10] node {$change_{green}$}           (greenstopped)
		(greenstopped) edge [bend left=10] node {$change_{red}$}           (redstopped)
		(finishedgreen) edge [bend left=10] node    {$change_{red}$}           (finishedred)
		(finishedred) edge [bend left=10] node    {$change_{green}$}           (finishedgreen)
		(greenstopped) edge node {$drive$}           (finishedgreen);
		
		\end{tikzpicture}
	\end{figure}
	It would be natural to model the car and the traffic light as two parallel components. The car and the traffic light need to communicate so that the car can make a decision to drive or brake depending on the current state of the traffic light. Typically, such global information is modelled by introducing an extra parallel component that maintains the global information, in this case the colour of the traffic light. However, taking that approach we obtain a different LTS, which has an extra transition modelling the communication of car and traffic light with the extra component. Moreover, one must take care that decisions to drive or brake are made on the basis of up-to-date
information, e.g., by implementing a protocol that locks the additional component. In many cases it is realistic that the observed value is no longer up to date and in some cases we are also interested in analysing the consequences of this. In other cases however, we might want to abstract from such complications. When the information is constantly available to the observers, as is the case with a traffic light, we have even more reason to not introduce separate transitions communicating the global information.
	
	In some process algebras it is possible to define a communication function that specifies that a $drive$ action is the result of a communication between two actions of parallel components, e.g. a $drive\_if\_green$ action from the car and a $signal\_green$ action from the traffic light. This is somewhat unnatural as the traffic light does not really actively participate in the driving of the car. Moreover, if we introduce a second car that only wants to drive when the traffic light is red we would need to change the communication function, even though the communication of information does not essentially change. It would be better to let the colour of the traffic light be in a global variable. In that way the behaviour of the traffic light and cars acting upon information from the traffic light is more separated, we obtain a better separation of concerns.
\end{example}

In the early days of process algebra, doing away with global variables in favour of message passing and local variables was an important step to further develop the field \cite{DBLP:journals/tcs/Baeten05}. Since then there have, nevertheless, been some efforts to reintroduce notions of globally available data. 

In \cite{proces_alg_book} propositional signals and a state operator are presented. The state operator tracks the value of some information. Based on the current value a number of propositions can be signalled to the rest of the system. In the example of the traffic lights the process modelling the traffic light could track the state of the light, emitting signals such as $lightGreen$ and $\neg lightRed$, which can in turn be used as conditions in the process modelling the car. In this approach the value of the global variable is not communicated directly, which restricts conditions based on global variables to propositional logic. 

Other approaches, such as the one presented in \cite[Chapter 19]{DBLP:series/txcs/Roscoe10}, model global variables as separate parallel processes and use a protocol to ensure only one process accesses a global variable at the time. This approach introduces extra internal steps, which increase the statespace. Moreover, it introduces divergence when a process locks a global variable, reads the value, concludes that it cannot make a step and unlocks the variable again. 

Formalisms based on Concurrent Constraint Programming (CCP) \cite{DBLP:conf/popl/SaraswatRP91} have global data at their core. In CCP a central store houses a set of constraints. Concurrent processes may \emph{tell} a constraint, adding it to the global store or \emph{ask} a constraint, checking whether it is entailed by the constraints in the store. An ask will block until other processes have added sufficient constraints to the store. Process calculi based on the coordination language LINDA \cite{DBLP:conf/coordination/NicolaP96} also use global data. In these process calculi there is a global set of data elements. Similarly to CCP, processes may tell a data element (adding it to the global set) or ask a data element (checking whether it is in the set). Additionally, processes may \emph{get} an element, removing it from the data set. LINDA does not have a concept of variables, just a central set of data elements. Generally, process calculi based on CCP or LINDA do not allow asking a constraint/data element and acting upon the information in a single step. 

The goal of our work is to propose and study (i) a process calculus with global variables (ii) a modal logic that can refer to the values of global variables (iii) an encoding in an existing process calculus and logic with tool support. In this paper we propose a simple process calculus where every component of the system can access the current value of global variables directly. We define appropriate notions of equivalence for our process calculus. Our first contribution is an extension of the Hennessy-Milner Logic (HML) with two new operators that is strong enough to differentiate non-equivalent process expressions. Our second contribution is an encoding of our process algebra in mCRL2 and our extended logic in standard HML. This encoding is such that the translated formula holds for the translated process expression if and only if the original formula holds for the original process expression.

This paper is organised as follows. In Section \ref{sec:GlobalVarDef} we define a simple process algebra with global variables. In Section \ref{sec:equivalence-process} we give appropriate notions of equivalence for our process algebra. We continue by defining an extension of the Hennessy-Milner Logic in Section \ref{sec:HMLogic} and relating it to our equivalence notions in Section \ref{sec:HML-correspondence}. In Section \ref{sec:translation} we show how our process algebra with global variables can be encoded in mCRL2. Sections \ref{sec:discussion} and \ref{sec:conclusion} discuss the results and conclude this work.

\section{A simple process algebra with global variables}
\label{sec:GlobalVarDef}
In this section we will introduce a process algebra with global variables and its semantics. For convenience we will, in this paper, assume a single data domain $D$. We will use \Var{} to denote the finite set of global variable names.

\newcommand{\PexprG}{\ensuremath{\mathcal{P}}}

We presuppose a set of actions $\Act{}$ and derive a set of transition labels $\TL{} \defeqn \Act \cup \{assign(v,d) \mid v\in \Var{} \land d\in D\}$. We also presuppose a set of process names \PN. The set of process expressions \PexprG{} is generated by the following grammar containing \emph{action\ prefix, inaction, choice, parallelism, encapsulation, recursion and conditionals}:

$$P:=\lambda.P\ |\ \delta\ |\ P+P\ |\ P \| P\ |\ \partial_B(P)\ |\ X\ |\ (v = d) \xrightarrow{} P,$$

\noindent where $\lambda \in \TL{}$, $B \subseteq \Act$, $X \in \PN{}$, $v \in \Var{}$ and $d\in D$. Inaction is similar to the process constant 0 in, for example, CCS \cite{DBLP:books/daglib/0067019} and TCP \cite{proces_alg_book}. Our process algebra supports recursion because we also define a recursive specification $E$ defining the process names. Let a recursive equation be an equation of the form $X \defeqn t$ with $X \in \PN{}$ and $t$ a process expression in \PexprG. A recursive specification contains one recursive equation $X \defeqn t$ for every $X \in \PN{}$. Every recursive specification should be guarded. This means that every occurrence of $X$ in $t$ is in the scope of an action prefix. For communication between parallel processes we use an ACP style communication function. We presuppose a binary communication function on the set of actions, i.e., a partial function $\gamma:\Act \times \Act \rightharpoonup \Act$ that is commutative and associative. We only allow handshakes (communication between two parties): if $\gamma(a,b) = c$ then $\gamma(c,d)$ is undefined for every $d$.

Let $\mathcal{V}$ be the set of all functions $\Var{} \rightarrow D$, i.e. the set of all valuations. Let $V \in \mathcal{V}$; we denote by $V[v \mapsto d]$ the assignment defined, for all $v' \in \Var{}$ by: 
$$
	V[v \mapsto d](v') =
	\left\{
	\begin{array}{ll}
	d  & \mbox{if } v' = v \\
	V(v') & \mbox{if } v' \neq v
	\end{array}
	\right.
$$

\noindent In Definition \ref{def:TransitionSystem} we give the usual definition of Labelled Transition Systems (LTSs).

\begin{definition}
	A Labelled Transition System (LTS) is a tuple $(S,L,\xrightarrow{},s)$, where
	\begin{itemize}
		\item $S$ is a set of states,
		\item $L$ is a set of transition labels,
		\item ${\rightarrow} \subseteq S\times L \times S$ is the transition relation,
		\item $s \in S$ is the initial state.
	\end{itemize}
	\label{def:TransitionSystem}
\end{definition}
We now want to associate an LTS with the process algebra. As the behaviour of a process expression depends on the valuation of global variables a state is a pair $\langle P, V\rangle$ of a process expression $P$ and a valuation function $V$. The set of states is $\PexprG \times \mathcal{V}$. The transition relation is the least relation on states satisfying the rules of the structural operational semantics (see Table \ref{tab:os}). 

\begin{table}\small\center
	\begin{tabular}{c} \hline \\
		$\RPref\
		\inference{}{\state{\pref[a]P}{V}\step{a}\state{P}{V}}$
		\qquad\quad $\RAsgn\
		\inference{}{\state{assign(v,d).P}{V}\step{assign(v,d)}\state{P}{V[v \mapsto d]}}$
		\\ \\
		$\RCon\ 
		\inference{\state{P}{V}\step{\lambda}\state{P'}{V'}}{\state{(v=d)\rightarrow P}{V} \step{\lambda}\state{P'}{V'}} V(v) = d$
		\qquad\quad $\RRec\ \inference{\state{P}{V}\step{\lambda}\state{P'}{V'}}{\state{X}{V}\step{\lambda}\state{P'}{V'}}X\defeqn P$
		\\ \\
		$\RSuml\
		\inference{\state{P}{V}\step{\lambda}\state{P'}{V'}}%
		{\state{{P\acmp Q}}{V}\step{\lambda}\state{P'}{V'}}$\qquad\quad
		$\RSumr\
		\inference{\state{Q}{V}\step{\lambda}\state{Q'}{V'}}%
		{\state{{P\acmp Q}}{V}\step{\lambda}\state{Q'}{V'}}$
		\\ \\
		$\RParl\ \inference{\state{P}{V}\step{\lambda}\state{P'}{V'}}{\state{P\pcmp
				Q}{V}\step{\lambda}\state{P'\pcmp Q}{V'}}$\qquad\quad
		$\RParr\ \inference{\state{Q}{V}\step{\lambda}\state{Q'}{V'}}{\state{P\pcmp Q}{V}\step{\lambda}\state{P\pcmp Q'}{V'}}$
		\\ \\
		$\RComm\ \inference{\state{P}{V}\step{a}\state{P'}{V} & \state{Q}{V}\step{b}\state{Q'}{V}}{\state{P\pcmp Q}{V}\step{c}\state{P'\pcmp Q'}{V}}\gamma(a,b)=c$
		\\ \\
		$\REnc\ \inference{\state{P}{V}\step{\lambda}\state{P'}{V'}}{\state{\encaps[B](P)}{V}\step{\lambda}\state{\encaps[B](P')}{V'}}\lambda\notin
					B$ 
		\\ \\
		\hline
	\end{tabular}
	\caption{Structural operational semantics.}\label{tab:os}
\end{table}

Note that we only allow processes to synchronise on actions and not on assignments. This design decision was made since assignments change the valuation function, whereas actions cannot change the valuation. When two processes synchronise on assignments then it is not clear what the resulting effect on the value of the variable should be.

\begin{example}
	Consider the interaction between a car and a traffic light controller (TLC). The TLC sets the colour of a traffic light which the driver of the car acts upon. There is one global variable $t$ and the data domain consists of two elements $D = \{green, red\}$. The recursive specification consists of two process equations, given below.
	
	$$CAR \defeqn ((t = green) \xrightarrow{} drive.\delta)\ +\ ((t = red) \xrightarrow{} brake.((t = green) \rightarrow drive.\delta)) $$
	$$TLC \defeqn ((t = green) \xrightarrow{} assign(t,red).TLC)$$ $$+ ((t = red) \xrightarrow{} assign(t,green).TLC)$$
	
	\noindent Using the SOS we can derive an LTS with $\langle CAR || TLC, V \rangle$ as initial state, where $V(t) = green$. Note that this LTS is isomorphic to the LTS presented in Example \ref{exm:car-intro}. We only show the states reachable from the initial state. The initial state is marked with an arrow pointing to it.
	
	\begin{figure}[H]
			\centering
			\begin{tikzpicture}[
			>=Stealth,
			shorten >=1pt,
			auto,
			node distance=1.7 cm,
			scale = 1,
			transform shape,
			state/.style={rectangle, inner sep=5pt}
			]
			
			\node[initial,state,initial text=]  (start)                    {$\langle CAR \|TLC, V \rangle$};
			\node[state, align=center]                        (redbefore)    [below=2 cm of start]   {$\langle CAR \|TLC, V[t \mapsto red] \rangle$};
			\node[state, align=center]                        (redstopped)    [right= 2cm of redbefore]   {$\langle (t = green) \rightarrow drive.\delta $\\$\|TLC, V[t \mapsto red] \rangle$};
			\node[state, align=center]                        (greenstopped)    [left= 2.3cm of finishedgreen]   {$\langle (t = green) \rightarrow drive.\delta $\\$\|TLC, V \rangle$};
			
			\node[state]                        (finishedred)    [right=of redstopped]   {$\langle \delta \|TLC, V[t \mapsto red] \rangle$};
			\node[state]                        (finishedgreen)    [above=2cm of finishedred]   {$\langle \delta \|TLC, V\rangle$};
			
			\path[->] (start) edge [bend left=15] node    {$drive$}           (finishedgreen)
			(start) edge [bend left=10, pos=0.3]  node    {$assign(t,red)$}           (redbefore)
			(redbefore) edge node    {$brake$}           (redstopped)
			(redbefore) edge [bend left=10, align=center] node  {$assign(t, $\\$ green)$}           (start)
			(redstopped) edge [bend left=10] node {$assign(t, green)$}           (greenstopped)
			(greenstopped) edge [bend left=10] node {$assign(t, red)$}           (redstopped)
			(finishedgreen) edge [bend left=10, align=center] node    {$assign(t, $\\$ red)$}           (finishedred)
			(finishedred) edge [bend left=10, pos=0.35] node    {$assign(t, green)$}           (finishedgreen)
			(greenstopped) edge node {$drive$}           (finishedgreen);
			
			\end{tikzpicture}
		\end{figure}
\end{example}

\section{Equivalence of process expressions}\label{sec:equivalence-process}
We will examine equivalence relations in the context of global variables. To start we note that we can examine equivalence on two levels: on the level of process expressions and on the level of pairs of process expression together with an initial valuation (from which we can derive an LTS). We begin by exploring the equivalence of process expressions.

We require of the equivalence relation on process expressions that if $P$ and $Q$ are equivalent then we can safely replace $P$ with $Q$ in any larger process expression. In other words, the equivalence relation on process expressions should be a congruence for the process algebra. 

Typically, equivalence of process expressions is established by a notion of bisimilarity. Most variants of bisimulation only consider the labels on transitions. Strong bisimilarity (defined in Definition \ref{def:strong-bisimilarity}) is, however, not a congruence for our process algebra, which we will demonstrate with an example. 

\begin{definition}{Strong bisimilarity:}
	\label{def:strong-bisimilarity}
	A relation $\bisimR{} \subseteq S \times S$, where $S$ is the set of states of an LTS, is a strong bisimulation relation if and only if for all states $s$ and $t$ and labels $\lambda$ we have $(s,t) \in \bisimR{}$ implies that
	\begin{itemize}
		\item for all states $s'$: $s \xrightarrow{\lambda} s'$ implies there exists a state $t'$ such that $t \xrightarrow{\lambda} t'$ and $(s', t') \in \bisimR{}$,
		\item for all states $t'$: $t \xrightarrow{\lambda} t'$ implies there exists a state $s'$ such that $s \xrightarrow{\lambda} s'$ and $(s', t') \in \bisimR{}$.
	\end{itemize}
	Two states $s$ and $t$ are strongly bisimilar, denoted by $s \bisim{} t$, if and only if there exists a strong bisimulation relation $\bisimR{}$ such that $(s, t) \in \bisimR{}$.
\end{definition}

\begin{example}\label{exmp:congruence}
Consider process expressions $P = (v=0)\rightarrow a.\delta$ and $Q = a.\delta$. Note that $P$ and $Q$ are simply abbreviations of process expressions, not process names. Let $V$ map a global variable $v$ to $0$ and $D = \{0,1\}$. The reachable fragments of the LTSs with $\langle P,V\rangle$ and $\langle Q,V\rangle$ as initial state are shown in Figure \ref{fig:exampleHM1}. 

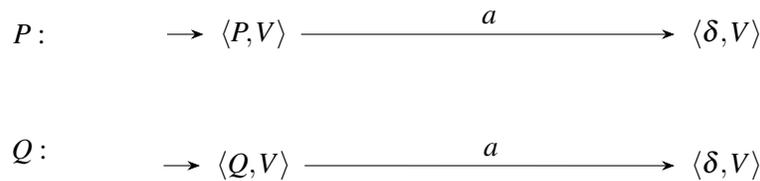
\begin{figure}[H]
	\centering
	\begin{tikzpicture}[
	>=Stealth,
	shorten >=1pt,
	auto,
	node distance=1 cm,
	scale = 1,
	transform shape,
	state/.style={rectangle, inner sep=5pt}
	]	
	\node[state, align=center,initial,initial text=]  (Pcond0)  {$\langle P,V \rangle$};
	\node[state,initial,initial text=]  (Qa0) [below=1cm of Pcond0]         {$\langle Q,V \rangle$};
	
	\node[state]  (Pdelta0)       [right=5cm of Pcond0]          {$\langle \delta,V \rangle$};
	\node[state]  (Qdelta0)		 [below=1cm of Pdelta0]        		  {$\langle \delta,V \rangle$};
	
	\node (P)  [left=2cm of Pcond0] {$P:$};
	\node (Q)  [below=1cm of P] {$Q:$};

	\path[->]
	(Pcond0) edge   node[pos=0.5]     {$a$}           (Pdelta0)
	(Qa0) edge   node[pos=0.5]     {$a$}           (Qdelta0);
	\end{tikzpicture}
	\caption{Part of the transition system space of $P$ and $Q$}
	\label{fig:exampleHM1}
\end{figure}

Processes $P$ and $Q$ seem behaviourally equivalent looking at the reachable transitions, the states $\langle P,V\rangle$ and $\langle Q,V \rangle$ are in fact strongly bisimilar. The problem arises when we add a parallel component that can assign a different value to the global variable. Let us consider the process expression $R = assign(v, 1).\delta$. The reachable fragments of the LTSs with $P\|R$ and $Q\|R$ as initial state are shown in Figure \ref{fig:exampleHM2}. Clearly $P\|R$ and $Q\|R$ are not strongly bisimilar and therefore strong bisimilarity is not a congruence for our process algebra. 

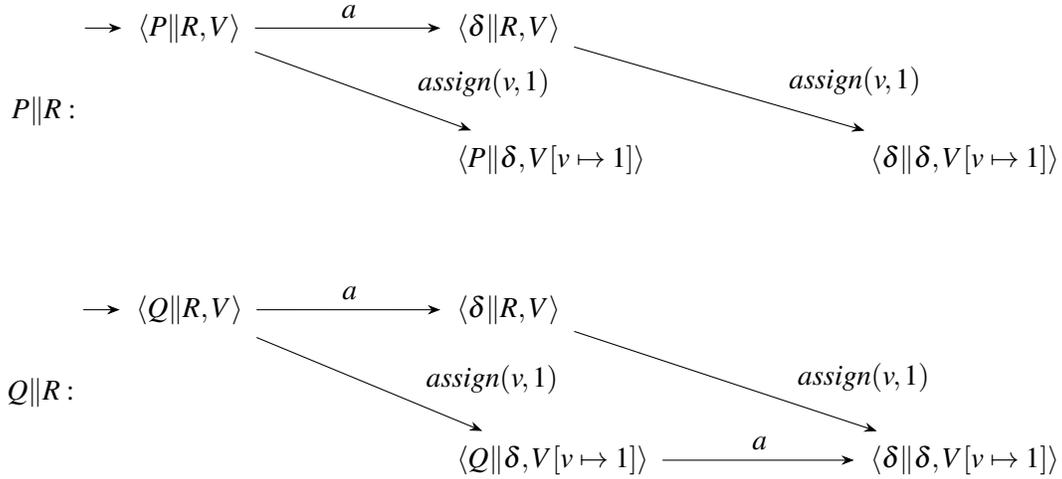
\begin{figure}[H]
	\centering
	\begin{tikzpicture}[
	>=Stealth,
	shorten >=1pt,
	auto,
	node distance=1 cm,
	scale = 1,
	transform shape,
	state/.style={rectangle, inner sep=5pt}
	]	
	\node[state, align=center,initial,initial text=]  (PPR0)  {$\langle P\|R,V \rangle$};
	\node[state]  (PR0)       [right=2.5cm of PPR0]          {$\langle \delta\|R,V \rangle$};
	\node[state]  (PP1)		  [below right=1cm and 2.5cm of PPR0]    	  	 {$\langle P\|\delta,V[v \mapsto 1] \rangle$};
	\node[state]  (Pdelta1)		  [below right=1cm and 8cm of PPR0]    	  	 {$\langle \delta\|\delta,V[v \mapsto 1] \rangle$};
	
	\node[state,initial,initial text=]  (QQR0) [below=3cm of PPR0]         {$\langle Q\|R,V \rangle$};
	\node[state]  (QR0) [below right=3cm and 2.5cm of PPR0]         {$\langle \delta\|R,V \rangle$};
	\node[state]  (QQ1)       		[below right=5cm and 2.5cm of PPR0]         {$\langle Q\|\delta,V[v \mapsto 1] \rangle$};
	\node[state]  (Qdelta1)      [below right=5cm and 8cm of PPR0]         	  {$\langle \delta\|\delta,V[v \mapsto 1] \rangle$};
	
	\node (P)  [below left=0.4cm and 0.5cm of PPR0] {$P\|R:$};
	\node (Q)  [below left=0.4cm and 0.5cm of QQR0] {$Q\|R:$};
	
	\path[->]
	(PPR0) edge   node[pos=0.5]     {$a$}           (PR0)
	(PPR0) edge   node[pos=0.7]     {$assign(v, 1)$}           (PP1)
	(PR0) edge   node[pos=0.7]     {$assign(v, 1)$}           (Pdelta1);
	
	\path[->]
	(QQR0) edge   node[pos=0.5]     {$a$}           (QR0)
	(QQR0) edge   node[pos=0.7]     {$assign(v, 1)$}           (QQ1)
	(QR0) edge   node[pos=0.7]     {$assign(v, 1)$}           (Qdelta1)
	(QQ1) edge   node[pos=0.5]     {$a$}           (Qdelta1);
	\end{tikzpicture}
	\caption{Part of the transition system space of $P$ and $Q$}
	\label{fig:exampleHM2}
\end{figure}
\end{example}

We will use the notion of stateless-bisimilarity, defined in \cite{DBLP:journals/iandc/MousaviRG05}, as an equivalence relation on process expressions. In essence, stateless-bisimilarity relates process expressions that behave the same under any valuation. 

\begin{definition}{Stateless-bisimilarity:}
	\label{def:stateless-bisimilarity}
	A relation $\bisimR{sl} \subseteq \PexprG \times \PexprG$ is a stateless bisimulation relation if and only if for all process expressions $P$ and $Q$ and labels $\lambda$ we have $(P,Q) \in \bisimR{sl}$ implies that
	\begin{itemize}
		\item for all process expressions $P'$ and valuation functions $V,V'\in\mathcal{V}$: $\langle P,V \rangle \xrightarrow{\lambda} \langle P',V' \rangle$ implies there exists a process expression $Q'$ such that $\langle Q,V \rangle \xrightarrow{\lambda} \langle Q',V' \rangle$ and $(P',Q') \in \bisimR{sl}$,
		\item for all process expressions $Q'$ and valuation functions $V,V'\in\mathcal{V}$: $\langle Q,V \rangle \xrightarrow{\lambda} \langle Q',V' \rangle$ implies there exists a process expression $P'$ such that $\langle P,V \rangle \xrightarrow{\lambda} \langle P',V' \rangle$ and $(P',Q') \in \bisimR{sl}$.
	\end{itemize}
	Two process expressions $P$ and $Q$ are stateless bisimilar, denoted by $P \bisim{sl} Q$, if and only if there exists a stateless bisimulation relation $\bisimR{sl}$ such that $(P,Q) \in \bisimR{sl}$.
\end{definition}

The deduction system of our process algebra is in process-tyft format from which it follows that stateless-bisimilarity is a congruence \cite{DBLP:journals/iandc/MousaviRG05}. 

In the case that global variables cannot be changed by the environment and we have a specific initial valuation in mind we might not care about the behaviour under valuations that will never occur. To that end we use state-based bisimilarity \cite{DBLP:journals/iandc/MousaviRG05}, which is defined on states rather than process expressions.  

\begin{definition}{State-based bisimilarity:}
	\label{def:strong-data-bisimilarity}
	A relation $\bisimR{sb} \subseteq (\PexprG \times \Val{}) \times (\PexprG \times \Val{})$ is a state-based bisimulation relation if and only if for all states $\langle P, V_1\rangle$ and $\langle Q, V_2 \rangle$ and labels $\lambda$ we have $(\langle P, V_1 \rangle, \langle Q, V_2 \rangle ) \in \bisimR{sb}$ implies that $V_1 = V_2$ and
	\begin{itemize}
		\item for all process expressions $P'$ and valuation functions $V'$: $\langle P,V_1 \rangle \xrightarrow{\lambda} \langle P',V' \rangle$ implies there exists a process expression $Q'$ such that $\langle Q,V_2 \rangle \xrightarrow{\lambda} \langle Q',V' \rangle$ and $(\langle P', V' \rangle, \langle Q', V' \rangle ) \in \bisimR{sb}$,
		\item for all process expressions $Q'$ and valuation functions $V'$: $\langle Q,V_2 \rangle \xrightarrow{\lambda} \langle Q',V' \rangle$ implies there exists a process expression $P'$ such that $\langle P,V_1 \rangle \xrightarrow{\lambda} \langle P',V' \rangle$ and $(\langle P', V' \rangle, \langle Q', V' \rangle ) \in \bisimR{sb}$.
	\end{itemize}
	Two states $\langle P, V_1\rangle$ and $\langle Q, V_2 \rangle$ are state-based bisimilar, denoted by $\langle P, V_1\rangle \bisim{sb} \langle Q, V_2 \rangle$, if and only if there exists a state-based bisimulation relation $\bisimR{sb}$ such that $(\langle P, V_1 \rangle, \langle Q, V_2 \rangle ) \in \bisimR{sb}$.
\end{definition}

State-based bisimilarity is not a congruence for our process algebra, the problem shown in Example \ref{exmp:congruence} applies.

There is a relation between stateless-bisimilarity and state-based bisimilarity. For any two process expressions $P$ and $Q$ we have that if $P \bisim{sl} Q$ then also $\langle P,V \rangle \bisim{sb} \langle Q,V \rangle$ for all valuations $V \in \Val$ \cite{DBLP:journals/iandc/MousaviRG05}.

State-based bisimilarity distinguishes LTSs on the valuation that is in a state: two states that are strongly bisimilar may not be state-based bisimilar due to differences in valuations in reachable states. This takes into account that the value of global variables may be essential to the modelled system and may be visible to the environment. 

\section{Hennessy-Milner logic}
\label{sec:HMLogic}
In order to reason about properties of a process expression or system specification we define a logic. Standard Hennessy-Milner Logic (HML) \cite{HennessyM85} is insufficient for our purpose, for two reasons. The first reason is that we would like to conveniently refer to global variables in the logic. The second reason for extending the logic is that we want a correspondence between the logic and stateless bisimilarity. Process expressions $a.\delta$ and $ (v = 0) \xrightarrow a.\delta$ are not stateless bisimilar but in the case that we have a valuation function $V$ that maps $v$ to 0 then states $\langle a.\delta, V \rangle$ and $\langle (v = 0) \xrightarrow a.\delta, V \rangle$ cannot be distinguished using HML.

We extend HML with two new operators. The first operator is the check operator $(v=e)$. This operator returns a boolean which is true if and only if a global variable $v$ has value $e$. The second operator is the set operator $\setOp{v := e}$. The set operator sets the value of a global variable $v$ to $e$. This results in the following syntax for our logic:

$$\phi := true\ |\ false\ |\ (v = e) \ |\ \neg \phi\ |\ \phi \wedge \phi\ |\ \phi \vee \phi\ |\ 
\diamondOp{T}\phi\ |\ \boxOp{T}\phi\ |\ \setOp{v:=e}\phi$$

where $T$ is a nonempty finite set of transition labels. Depending on whether we include the check operator, the set operator or both, we will refer to the logic with HML$^{check}$, HML$^{set}$ or HML$^{check+set}$, respectively.

The formula $true$ holds in every state and $false$ holds in no states. The operators $\neg,\wedge,\vee$ have their usual meaning. The diamond operator $\diamondOp{T}\phi$ is true in a state $s$ if and only if a transition $s \xrightarrow{\lambda} s'$ exists where $\phi$ holds in $s'$ and $\lambda \in T$. The box operator $[T]\phi$ holds in a state $s$ if and only if for every state $s'$ and transition label $\lambda \in T$ we have $s \xrightarrow{\lambda} s'$ implies $\phi$ holds in $s'$. 

The check operator $(v=e)$ is true in a state $\langle P,V\rangle$ if and only if $V(v)=e$. The set operator $\downarrow$ indicates that $\setOp{v:=e}\phi$ is true in all states $\langle P,V\rangle$ if and only if $\phi$ is true in $\langle P,V[v \mapsto e]\rangle$. Note that the set operator allows us to reason about parts of an LTS that are not reachable from the initial state. Further note that the set operator allows us to distinguish $\langle a.\delta, V \rangle$ and $\langle (v = 0) \xrightarrow a.\delta, V \rangle$ even if $V(v) = 0$ the formula $\setOp{v:=1}\langle \{a\} \rangle true$ distinguishes them. We will use the notation $\setOp{V}$,  $V \in \mathcal{V}$, to set the value of all global variables to the value specified by $V$. This is a shorthand for a sequence of regular set operations. Note that the number of global variables is finite and the order of set operations is irrelevant in the sequence of set operations as each sets a different variable.

\subsection{Semantics}
In this section we will define semantic rules to obtain all states that satisfy a HML$^{check+set}$ formula. We have obtained the semantics of standard HML from \cite{sv_book}. Let $\phi$ be a modal formula, let $(S,L,\xrightarrow{},s)$ be an LTS. We inductively define the interpretation of $\phi$, notation $\ds{\phi}$, where $\ds{\phi}$ contains all states $u\in S$ where $\phi$ is true. Note that the check and set operators are only defined for LTSs where states consist of both a process expression and a valuation.
\begin{center}
	\begin{tabular}{lll}
		$\ds{true} $&=&$ S$\\
		$\ds{false} $&=&$ \emptyset$\\
		$\ds{v = e} $&=&$ \{\langle P,V\rangle\ \in S\ |\ V(v) = e\}$\\
		$\ds{\neg \phi} $&=&$ S \setminus \ds{\phi} $\\
		$\ds{\phi \wedge \phi'} $&=&$ \ds{\phi} \cap\ds{\phi'} $\\
		$\ds{\phi \vee \phi'} $&=&$ \ds{\phi} \cup \ds{\phi'} $\\
		$\ds{\langle T \rangle \phi} $&=&$ \{u \in S\ |\ \exists u' \in \ds{\phi}, \lambda \in T: u\xrightarrow{\lambda}u'  \} $\\
		$\ds{[ T ] \phi} $&=&$ \{u \in S\ |\ \forall u' \in S, \lambda \in T:  (u\xrightarrow{\lambda}u') \Rightarrow u' \in \ds{\phi}  \} $\\
		$\ds{\setOp{v := e}\phi} $&=&$ \{\langle P, V\rangle \in S\ |\ \langle P,V[v\mapsto e]\rangle \in \ds{\phi}  \} $\\
	\end{tabular}    
\end{center}

\section{Relation logic and bisimilarity}\label{sec:HML-correspondence}
There is a nice correspondence between strong bisimilarity and HML: two states in an LTS are strongly bisimilar if and only if they satisfy the same HML formulas \cite{HennessyM85}. This correspondence is often called the Hennessy-Milner theorem. We would like a similar correspondence between process expressions and states and the extended HML. First, we introduce the notion of an image-finite process. As an LTS contains all possible process expressions (and valuations) we want to impose image-finiteness only for reachable states and process expressions, so we start by defining reachability.

\begin{definition}{Reachability states:}
	A state $s'$ is reachable from a state $s$ if there exist states $s_0, \dots,s_n$ and labels $\lambda_1,\dots,\lambda_n$ such that $s=s_0 \land s_0 \step{\lambda_1}s_1 \land\cdots\land s_{n-1} \step{\lambda_n}s_n\land  s_n = s'$.
\end{definition}

\begin{definition}{Reachability process expressions:}
	Process expression $P’$ is reachable from a process expression $P$ if there exist processes  $P_0, \dots,P_n$ and labels $\lambda_1,\dots,\lambda_n$ such that $P=P_0 \land \exists_{V_0,V_1} \state{P_0}{V_0} \step{\lambda_1}\state{P_1}{V_1} \land\cdots\land\exists_{V_{n-1},V_{n}} \state{P_{n-1}}{V_{n-1}} \step{\lambda_n}\state{P_n}{V_{n}}\land  P_n = P’$.
\end{definition}

\begin{definition}{Image-finiteness:}
	A state $\langle P, V \rangle$ is image finite if and only if the set $\{\langle P', V' \rangle | \langle P, V \rangle \xrightarrow{\lambda} \langle P', V' \rangle \}$ is finite for every label $\lambda$. A process expression $P$  is image finite if and only if for every process expression $P'$ reachable from $P$ and every valuation $V$ the state $\langle P', V \rangle$ is image-finite.
\end{definition}

\noindent An example of a state that is not image finite is $\langle A, V \rangle$, with $A \defeqn a.\delta || A$.

We can now prove the following two correspondences on the level of process expressions and states.
\begin{theorem}\label{thm:relation-logic-stateless}
Let $P$ and $Q$ be two image-finite process expressions. Then $P \bisim{sl} Q$ if and only if for all valuations $V$  and all HML$^{check+set}$ formulas $\phi$ we have that $\langle P, V \rangle \in \ds{\phi} \Leftrightarrow \langle Q,V \rangle \in \ds{\phi}$.
\end{theorem}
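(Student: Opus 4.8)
The plan is to prove the two directions of the biconditional separately, following the structure of the classical Hennessy-Milner theorem but carefully tracking valuations.

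\paragraph{Soundness direction ($\Rightarrow$).}
First I would show that stateless bisimilarity implies modal equivalence. Suppose $P \bisim{sl} Q$ via a stateless bisimulation $\bisimR{sl}$. By the result cited after Definition~\ref{def:stateless-bisimilarity}, $P \bisim{sl} Q$ gives $\langle P,V\rangle \bisim{sb} \langle Q,V\rangle$ for every valuation $V$. The key is then a lemma: if $\langle P,V\rangle \bisim{sb} \langle Q,V'\rangle$ (so in particular $V = V'$), then for every HML$^{check+set}$ formula $\phi$, $\langle P,V\rangle \in \ds{\phi} \Leftrightarrow \langle Q,V\rangle \in \ds{\phi}$. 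This is proved by induction on the structure of $\phi$. The Boolean cases are immediate; the $\diamondOp{T}$ and $\boxOp{T}$ cases are the standard bisimulation argument, using that a matching transition $\langle Q,V\rangle \step{\lambda}\langle Q',V''\rangle$ lands on the \emph{same} resulting valuation $V''$ as $\langle P,V\rangle\step{\lambda}\langle P',V''\rangle$ and that the resulting states are again state-based bisimilar, so the induction hypothesis applies. The new cases are the check operator $(v=e)$, which holds in $\langle P,V\rangle$ iff $V(v)=e$ iff it holds in $\langle Q,V\rangle$ since the valuations are equal; and the set operator $\setOp{v:=e}\phi$, where I use that $\langle P,V\rangle\bisim{sb}\langle Q,V\rangle$ implies $\langle P,V[v\mapsto e]\rangle\bisim{sb}\langle Q,V[v\mapsto e]\rangle$ — this follows because the relation $\{(\langle P,W\rangle,\langle Q,W\rangle) : W\in\Val\}$ obtained from a state-based bisimulation on $\langle P,V\rangle,\langle Q,V\rangle$ need not itself be a bisimulation, so I would instead derive it from $P\bisim{sl}Q$, which gives state-based bisimilarity of $\langle P,W\rangle,\langle Q,W\rangle$ for \emph{all} $W$, in particular $W = V[v\mapsto e]$. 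Then the induction hypothesis on $\phi$ closes the case.

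\paragraph{Completeness direction ($\Leftarrow$).}
For the converse I would use image-finiteness. Define a relation $\bisimR{sl}$ on process expressions by: $(P',Q') \in \bisimR{sl}$ iff $P'$ is reachable from $P$, $Q'$ is reachable from $Q$, and for all valuations $W$ and all HML$^{check+set}$ formulas $\psi$, $\langle P',W\rangle\in\ds{\psi}\Leftrightarrow\langle Q',W\rangle\in\ds{\psi}$ (plus the symmetric closure and $(P,Q)$ itself). I claim $\bisimR{sl}$ is a stateless bisimulation. Take $(P',Q')\in\bisimR{sl}$ and a transition $\langle P',V\rangle\step{\lambda}\langle P'',V''\rangle$. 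I must find $Q''$ with $\langle Q',V\rangle\step{\lambda}\langle Q'',V''\rangle$ and $(P'',Q'')\in\bisimR{sl}$. Suppose not. By image-finiteness, the set $\{\langle Q''_1,V''\rangle,\dots,\langle Q''_k,V''\rangle\}$ of $\lambda$-successors of $\langle Q',V\rangle$ with resulting valuation exactly $V''$ is finite (it is a subset of the image, which is finite; those with a different resulting valuation are irrelevant because I will use a formula pinning down $V''$). For each $i$, since $(P'',Q''_i)\notin\bisimR{sl}$, there is a formula $\psi_i$ and valuation $W_i$ distinguishing $\langle P'',W_i\rangle$ and $\langle Q''_i,W_i\rangle$; by prefixing with set operators I may assume the distinguishing happens at valuation $V''$, i.e.\ I replace $\psi_i$ by $\setOp{V''}\chi_i$ where $\langle P'',V''\rangle\in\ds{\chi_i}$ but $\langle Q''_i,V''\rangle\notin\ds{\chi_i}$ (using the $\setOp{V}$ shorthand from the text, and noting $\ds{\setOp{V''}\chi_i}$ depends only on the process component). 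Then $\chi := \bigwedge_{i=1}^{k}\chi_i$ holds in $\langle P'',V''\rangle$ but fails in every $\langle Q''_i,V''\rangle$. Now the formula $\setOp{V}\diamondOp{\{\lambda\}}(\, (\bigwedge_{v\in\Var}(v=V''(v))) \wedge \chi\,)$ holds in $\langle P',W\rangle$ for every $W$ (witnessed by the transition to $\langle P'',V''\rangle$) but fails in $\langle Q',W\rangle$ for every $W$, contradicting $(P',Q')\in\bisimR{sl}$. The conjunct $\bigwedge_{v\in\Var}(v=V''(v))$ is what forces any witnessing $Q'$-transition to have resulting valuation $V''$, so only the finitely many $Q''_i$ are candidates; this is where finiteness of $\Var$ and image-finiteness combine. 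Finally I check that $P''$ and $Q''_i$ are reachable from $P$ and $Q$ respectively so the relation is well-defined under transitions, and that $(P,Q)\in\bisimR{sl}$ by hypothesis, giving $P\bisim{sl}Q$.

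\paragraph{Main obstacle.}
The delicate point is the interplay between the valuation component and image-finiteness in the completeness direction: a priori $\langle Q',V\rangle$ could have transitions to many states differing only in process component but, crucially, also transitions whose resulting valuation is not $V''$ — I must argue that only successors with resulting valuation $V''$ matter and that these are finite in number, which relies on image-finiteness of $\langle Q',V\rangle$ together with the ability of the logic (via the check operator and finiteness of $\Var$) to pin down the entire target valuation. A secondary subtlety is ensuring, in the soundness direction, that the set operator case goes through: one cannot get it from state-based bisimilarity of the two given states alone, but must invoke the stronger hypothesis $P\bisim{sl}Q$ to obtain state-based bisimilarity at the modified valuation. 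I would make sure the induction in the soundness lemma is set up so that it is really an induction over \emph{all} pairs $\langle P,W\rangle\bisim{sb}\langle Q,W\rangle$ arising from $P\bisim{sl}Q$ simultaneously, which makes the set-operator step immediate.
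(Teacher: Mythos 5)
Your proposal is correct and follows essentially the same route as the paper: the forward direction by induction on $\phi$, and the converse by showing that the relation of HML$^{check+set}$-equivalence under all valuations is a stateless bisimulation, using image-finiteness to get finitely many $\lambda$-successors and building a distinguishing $\diamondOp{\{\lambda\}}$-formula whose conjuncts use the set operator to relocate distinguishing formulas and the check operator to exclude successors with the wrong target valuation. The only (harmless) differences are cosmetic: you pin down the entire target valuation with $\bigwedge_{v\in\Var}(v=V''(v))$ where the paper refutes each wrong-valuation successor with a single conjunct $(v=V'(v))$ for one differing variable, and you route the soundness direction through state-based bisimilarity, correctly noting that the set-operator case then forces you back to the full stateless hypothesis.
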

\begin{proof}
	We prove the two implications separately. To prove the implication from left to right assume $P \bisim{sl} Q$. The proof that for some HML$^{check+set}$ formula $\phi$ we have that $\langle P, V \rangle \in \ds{\phi}$ if and only if $\langle Q, V \rangle \in \ds{\phi}$ is straightforward by induction on the structure of $\phi$. 
	
	For the implication from right to left we assume that $\langle P,V \rangle$ and $\langle Q,V \rangle$ satisfy exactly the same formulae in HML$^{check+set}$. We shall prove that $\langle P, V \rangle \bisim{sl} \langle Q,V \rangle$. To this end, note that it is sufficient to show that the relation 
	$$\bisimR{sl} = \{(T,U) | T,U \in \PexprG{} \text{ and } \forall_{V \in \mathcal{V}} \langle T, V \rangle \text{ and }  \langle U, V \rangle \text{ satisfy the same HML$^{check+set}$ formulae}\}$$
	is a stateless bisimulation relation. Assume that $T \bisimR{sl} U$ and $\langle T, \overline{V} \rangle \xrightarrow{\lambda} \langle T', V' \rangle$ for some valuation $\overline{V}$. We shall now argue that there is a process $U'$ such that $\langle U, \overline{V} \rangle \xrightarrow{\lambda} \langle U', V' \rangle$ and $T' \bisimR{sl} U'$. Since $\bisimR{sl}$ is symmetric, this suffices to establish that $\bisimR{sl}$ is a stateless bisimulation relation.
	
	Now assume, towards a contradiction, that there is no $\langle U', V' \rangle$ such that $\langle U, \overline{V} \rangle \xrightarrow{\lambda} \langle U', V' \rangle$ and for all valuations $V\in \mathcal{V}$, $\langle U', V \rangle$ satisfies the same HML$^{check+set}$ formulas as $\langle T', V \rangle$. Since $\langle U, \overline{V} \rangle$ is image finite, the set of processes that $\langle U, \overline{V} \rangle$ can reach by performing a $\lambda$-labelled transition is finite, say $\{\langle U_1, V_1 \rangle, \dots, \langle U_n, V_n \rangle\}$ with $n \in \mathbb{N}$. For every $i \in \{1\dots n\}$, there exist a formula $\phi_i$ and valuation $V_i'$ such that $\langle T', V_i' \rangle \in \ds{\phi_i} \text{ and } \langle U_i, V_i' \rangle \notin \ds{\phi_i}$ or valuations $V_i$ and $V'$ differ for variable $v$. 
	
	We are now in a position to construct a formula that is satisfied by $\langle T, \overline{V} \rangle$ but not by $\langle U, \overline{V} \rangle$, contradicting our assumption that $\langle T, \overline{V} \rangle$ and $\langle U, \overline{V} \rangle$ satisfy the same formulae. 
	$$
	\mbox{We define for each } i \in \{1\dots n\}: refute(i) =
	\left\{
	\begin{array}{ll}
	\setOp{V_i'}\phi_i  & \mbox{if } \langle T', V_i' \rangle \in \ds{\phi_i} \text{ and } \langle U_i, V_i' \rangle \notin \ds{\phi_i} \\
	(v = V'(v)) & \mbox{if the valuations of } V' \mbox{ and } V_i' \mbox{ differ for } v 
	\end{array}
	\right.
	$$
	The formula $\langle \lambda \rangle (refute(1) \land refute(2) \land \dots \land refute(n))$ is satisfied by $\langle T, \overline{V} \rangle$ but not by $\langle U, \overline{V} \rangle$. 
	
\end{proof}

\begin{theorem}\label{thm:relation-logic-state-based}
	Let $\langle P, V \rangle$ and $\langle Q, V \rangle$ be states in some LTS $(S,\TL,\xrightarrow{},s)$ and let all states reachable from $\langle P, V \rangle$ and $\langle Q, V \rangle$ be image-finite. Then $\langle P, V \rangle \bisim{sb} \langle Q, V \rangle$ if and only if for all HML$^{check}$ formulas $\phi$ we have that $\langle P, V \rangle \in \ds{\phi} \Leftrightarrow \langle Q,V \rangle \in \ds{\phi}$.
\end{theorem}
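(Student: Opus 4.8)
The plan is to follow the structure of the proof of Theorem~\ref{thm:relation-logic-stateless}, adapting it to states rather than process expressions and, above all, to the absence of the set operator. As there, I would prove the two implications separately.

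For the implication from left to right I would prove, by induction on the structure of $\phi\in\text{HML}^{check}$, the more general statement that any two state-based bisimilar states satisfy exactly the same HML$^{check}$ formulas. The cases for $true$, $false$, $\neg$, $\wedge$ and $\vee$ are immediate from the induction hypothesis. For $\diamondOp{T}\phi$ and $\boxOp{T}\phi$ one invokes the transfer property of state-based bisimulation, using that a matching transition, by the definition of $\bisim{sb}$, leads to a state with the \emph{same} valuation, so that the induction hypothesis applies to the pair of target states. The only genuinely new case is the check operator: if $\langle P,V_1\rangle\bisim{sb}\langle Q,V_2\rangle$ then $V_1=V_2$ by definition of state-based bisimilarity, and hence $V_1(v)=e$ iff $V_2(v)=e$.

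For the implication from right to left I would consider the relation $\bisimR{sb}$ consisting of all pairs $(\langle P',V'\rangle,\langle Q',V'\rangle)$ such that $\langle P',V'\rangle$ is reachable from $\langle P,V\rangle$, $\langle Q',V'\rangle$ is reachable from $\langle Q,V\rangle$, and the two states satisfy the same HML$^{check}$ formulas, and show that it is a state-based bisimulation; since $\langle P,V\rangle$ and $\langle Q,V\rangle$ are assumed to satisfy the same formulas, $(\langle P,V\rangle,\langle Q,V\rangle)\in\bisimR{sb}$. First one observes that two states satisfying the same HML$^{check}$ formulas must carry the same valuation, because $\langle P',V_1'\rangle$ satisfies $(v=V_1'(v))$ for every $v\in\Var$; this both justifies writing the two components of a pair with a common valuation and makes $\bisimR{sb}$ a legitimate candidate (the definition of $\bisim{sb}$ demands equal valuations). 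The restriction to reachable states is needed because image-finiteness is assumed only there. For the transfer property, assume $(\langle P',V'\rangle,\langle Q',V'\rangle)\in\bisimR{sb}$ and $\langle P',V'\rangle\xrightarrow{\lambda}\langle P'',V''\rangle$, and suppose towards a contradiction that no $\lambda$-successor of $\langle Q',V'\rangle$ is $\bisimR{sb}$-related to $\langle P'',V''\rangle$. By image-finiteness of $\langle Q',V'\rangle$ its $\lambda$-successors form a finite set $\langle Q_1,V_1'\rangle,\dots,\langle Q_n,V_n'\rangle$; for each $i$ I would pick a formula $\phi_i$ with $\langle P'',V''\rangle\in\ds{\phi_i}$ and $\langle Q_i,V_i'\rangle\notin\ds{\phi_i}$ — taking $\phi_i=(v=V''(v))$ for a variable $v$ on which $V_i'$ and $V''$ disagree when $V_i'\neq V''$, and otherwise, since $\langle Q_i,V''\rangle$ and $\langle P'',V''\rangle$ then have equal valuations but are not $\bisimR{sb}$-related, a distinguishing HML$^{check}$ formula (negated if necessary). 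Then $\diamondOp{\{\lambda\}}\bigwedge_{i=1}^{n}\phi_i$ (the empty conjunction read as $true$) holds at $\langle P',V'\rangle$ but at none of the $\lambda$-successors of $\langle Q',V'\rangle$, hence not at $\langle Q',V'\rangle$, contradicting that the two satisfy the same formulas. By symmetry of $\bisimR{sb}$ this establishes the transfer property.

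The step I expect to be the main obstacle is precisely the one that in the proof of Theorem~\ref{thm:relation-logic-stateless} relied on the set operator ($\setOp{V_i'}\phi_i$), which is no longer available. The resolution is the observation that state-based bisimilarity, unlike stateless bisimilarity, only requires a transition to be matched by one reaching a state with the \emph{same} valuation $V''$; hence the distinguishing subformula $\phi_i$ can be evaluated directly at $V''$ with no need to reset the valuation, and the check operator alone suffices — it rules out $\lambda$-successors with a mismatched valuation and, just as essentially, it lets $\bisimR{sb}$ record that related states carry equal valuations. A secondary point to get right is the reachability bookkeeping, so that image-finiteness is genuinely available at every pair to which the finitary distinguishing-formula argument is applied.
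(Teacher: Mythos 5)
Your proposal is correct and follows essentially the same route as the paper: the left-to-right direction by induction on $\phi$ using that $\bisim{sb}$ forces equal valuations, and the right-to-left direction by the image-finiteness contradiction argument with exactly the paper's distinguishing formula $\diamondOp{\{\lambda\}}\bigwedge_i refute(i)$, where $refute(i)$ is either a plain distinguishing HML$^{check}$ formula or a check $(v=V''(v))$ on a variable where the target valuations disagree. Your additional care about reachability bookkeeping and about why formula-equivalent states must carry the same valuation fills in details the paper leaves implicit.
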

\begin{proof}
	The proof is very similar to the one given for Theorem \ref{thm:relation-logic-stateless}. We will only provide the distinguishing formula. 
	
	Let $\langle T, \overline{V} \rangle \xrightarrow{\lambda} \langle T', V' \rangle$ and let $\{\langle U_1, V_1 \rangle, \dots, \langle U_n, V_n \rangle\}$ be the set of states $\langle U,\overline{V} \rangle$ can reach with a $\lambda$-labelled transition. For every $i \in \{1\dots n\}$, there exists a formula $\phi_i$ such that $\langle T', V' \rangle \in \ds{\phi_i} \text{ and } \langle U_i, V_i \rangle \notin \ds{\phi_i}$ or valuations $V_i$ and $V'$ differ for variable $v$. 
	$$
	\mbox{We define for each } i \in \{1\dots n\}: refute(i) =
	\left\{
	\begin{array}{ll}
	\phi_i  & \mbox{if } \langle T', V' \rangle \in \ds{\phi_i} \text{ and } \langle U_i, V_i \rangle \notin \ds{\phi_i} \\
	(v = V'(v)) & \mbox{if the valuations of } V' \mbox{ and } V_i \mbox{ differ for } v 
	\end{array}
	\right.
	$$
	The formula $\langle \lambda \rangle (refute(1) \land refute(2) \land \dots \land refute(n))$ is satisfied by $\langle T, \overline{V} \rangle$ but not by $\langle U, \overline{V} \rangle$.
\end{proof}

Note that for process expressions we need the set operator in the logic, whereas for states we can only have the check operator on top of regular HML. Intuitively, we need the set operator on the level of process expressions to say something about the behaviour of the process for any valuation.

\section{Translation to mCRL2}\label{sec:translation}
For process algebras without global variables it might be the case that global variables can be modelled using different language constructs. Modelling global variables then often requires a protocol to regulate the access to global variables. In this section we explore how a process expression in our process algebra with global variables can be translated to mCRL2 without introducing extra internal activity. The resulting mCRL2 specification induces an LTS that is isomorphic to the LTS of the original process expression, save some selfloops signalling information on the valuation of that state. We also give a translation from HML$^{check}$ to the modal $\mu$-calculus, which is the logic that is used in mCRL2 to express properties. We show that a HML$^{check}$ formula holds for the original process expression if and only if the translated formula holds for the translated process expression.

\newcommand{\labels}{\ensuremath{\Lambda}}
\newcommand{\names}{\ensuremath{\underline{\Lambda}}}
\newcommand{\multiact}{\ensuremath{\mathbb{M}}}
\newcommand{\multiactnames}{\ensuremath{\underline{\mathbb{M}}}}
\newcommand{\semmultiact}{\ensuremath{\mathcal{M}}}
\subsection{Introduction of mCRL2}
We will introduce the syntax and semantics of the fragment of mCRL2 that is needed to encode global variables. In particular, we will introduce actions parametrised with data and multi-actions. We will not go into the details of the data language itself. It suffices that there exists a semantic interpretation function $\ds{.}$ that maps data expressions to elements of the data domain. We declare a set of data expressions $\mathcal{D}$ and a set of Boolean expressions $\mathcal{B}$ of which the interpretation is an element of $D$ or $\{true,false\}$, respectively. We also presuppose an equality relation $\approx$ on data expressions. For more information on the syntax and semantics of mCRL2 we refer the reader to \cite{sv_book}. Multi-actions in combination with the allow operator  were first proposed in \cite{DBLP:journals/entcs/Weerdenburg08}.

We presuppose a set of \emph{action names} $\underline{\Lambda}$, each with an associated arity. An
\emph{action label} $a(d_1,\dots,d_n)$ consists of an action name $a\in\underline{\Lambda}$ of arity
$n$ and a list of data parameters $d_1,\dots, d_n$. We denote by $\Lambda$ the set of action labels. If
$\alpha\in\Lambda$, then we denote by $\underline{\alpha}$ its name (e.g., $\underline{a(2,3,true)}=a$).

The set of multi-actions \multiact{} is generated by the following grammar:
$$\alpha := \alpha|\alpha\ |\ \tau\ |\ a(d_1,\dots,d_n),$$
where $a(d_1,\dots,d_n) \in \labels$ and $d_1$ to $d_n$ are data expressions or Boolean expressions. Also for each multi-action $\alpha$ we define $\underline{\alpha}$:
$$\underline{\tau} = \tau$$
$$\underline{a(d_1,...,d_n)} = a$$
$$\underline{\alpha|\beta} = \underline{\alpha}|\underline{\beta}.$$
The set of multi-actions where each action label in the multi-action is in \names{} is \multiactnames.

We define a multi-set over $A$, $(A,m)$, where $m: A\rightarrow\mathbb{N}$ is a function assigning a multiplicity to each element of $A$. We define $a \in (A,m)$ to be true if and only if $m(a) > 0$. As notation we use \multiset{} where the elements are listed together with their multiplicity, e.g. \multiset{a:2, b:3}. Over multi-sets $(A,m)$ and $(A,m')$ we define a binary operator \emph{addition}, denoted by $+$, that results in a multi-set $(A,m'')$, where for all $a$ in $A$ it is the case that $m''(a) = m(a) + m'(a)$. Similarly, we define a binary operator \emph{subtraction}, denoted by $-$, such that it results in a multi-set $(A,m'')$, where for all $a\in A$ we have that $m''(a) = max(m(a) - m'(a),0)$. Furthermore, we define inclusion, denoted by $\subseteq$, to hold if and only if for all $a \in A$ we have that $m(a) \leq m'(a)$. For multi-sets over labels we define $\underline{(\labels,m)} = (\names,m')$, where for all $a \in \labels$ it holds that $m(a) = m'(\underline{a})$. 

Given a multi-action $\alpha$ we inductively associate a \emph{semantic multi-action} \ds{\alpha} with it:
$$\ds{\tau} = \multiset{}$$
$$\ds{a(d_1,...,d_n)} = \multiset{a(\ds{d_1},...,\ds{d_n}):1}$$
$$\ds{\alpha|\beta} = \ds{\alpha} + \ds{\beta}$$

The set of all semantic multi-actions is \semmultiact. The set $\mathcal{P}_{mCRL2}$ of process expressions of the fragment of mCRL2 that we need in the translation is generated by the following grammar:
$$P:=\lambda.P\ |\ \delta\ |\ P+P\ |\ P \| P\ |\ \allow{M}{(P)}\ |\ X(d_1,...,d_n)\ |\ \sum_{d:D} P\ |\ \hide{I}{(P)}\ |\ \comm{C}{(P)},$$

\noindent where $\lambda\in\multiact$, $M$ a set of multi-action names, $M \subseteq \multiactnames$, $X$ is a process name, $I$ is a set of action names, $I \subseteq \names$, and $C$ is set of renamings from a set of multi-action names to an action name, notation $a|...|b \rightarrow c$. 

We introduce a function $\commF{C}(\alpha)$, where $\alpha$ is a semantic multi-action that applies communications in $C$ to $\alpha$, e.g. $\comm{\{a|b\rightarrow c\}}{\multiset{a:2,b:3}} = \multiset{b:1,c:2}$. A communication can only be performed when the parameters of action labels match. For the exact semantics of $\commF{C}(\alpha)$ we refer the reader to \cite{sv_book}.

We define a function $\hideF{I}{((\labels,m))}$, such that it results in a multi-set $(\labels,m')$ such that 
$$\forall_{a \in \labels}m'(a) = \left\{
\begin{array}{ll}
0  & \mbox{if } \underline{a} \in I \\
m(a) &\mbox{if } \underline{a} \notin I
\end{array}
\right.$$

The sum operator facilitates a non-deterministic choice over a data domain. For example, in the case the data domain $D$ is the natural numbers, $\sum_{n:D} a(n).P$ can make an $a(0)$ step to process $P[n := 0]$, an $a(1)$ step to process $P[n := 1]$, etcetera.

\begin{table}[h]\small\center
	\begin{tabular}{c} \hline \\
		$\RPref\
		\inference{}{\pref[\alpha]P\step{\ds{\alpha}}P}$
		\quad $\RPar\ \inference{P\step{\alpha}P' & Q\step{\beta}Q' \quad}
		{P\pcmp Q\step{\alpha + \beta}P'\pcmp Q'}$
		\\ \\
		$\RRec\ \inference{P[d_1:=t_,...,d_n:=t_n]\step{\alpha}P' & X(d_1:D_1,...,d_n:D_n)\defeqn P}{X(t_1,...,t_n)\step{\alpha}P'}$
		\\ \\
		$\RSuml\
		\inference{P\step{\alpha}P'}%
		{{P\acmp Q}\step{\alpha}P'}$\quad
		$\RSumr\
		\inference{Q\step{\alpha}Q'}%
		{{P\acmp Q}\step{\alpha}Q'}$
		\\ \\
		$\RSum\ \inference{P[d:=t_e]\step{\alpha}P' & t_e \in \mathcal{D}}{\sum_{d:D}P\step{\alpha}P'}$	
		\quad $\RHide\ \inference{P \step{\alpha}P'}{\tau_I(P)\step{\theta_I(\alpha)}\tau_I(P')}$
		\\ \\
		$\RParl\ \inference{P\step{\alpha}P'}{P\pcmp
				Q\step{\alpha}P'\pcmp Q}$\quad
		$\RParr\ \inference{Q\step{\alpha}Q'}{P\pcmp Q\step{\alpha}P\pcmp Q'}$
		\\ \\
		$\RComm \inference{P \step{\alpha} P'}{\Gamma_C(P)\step{\gamma_C(\alpha)}\Gamma_C(P')}$ 
		\quad $\RAllow\ \inference{P \step{\alpha}P'}{\nabla_M(P)\step{\alpha}\nabla_M(P')}\ds{\underline{\alpha}} \in M$
		\\ \\
		\hline
	\end{tabular}
	\caption{Structural operational semantics of our fragment of mCRL2.}\label{tab:os-mcrl2}
\end{table}

An LTS $(S,\semmultiact,\xrightarrow{},P)$ can be associated to a process expression $P$. The set of states is the set of process expressions, $S = \mathcal{P}_{mCRL2}$. The set of transitions is generated by the proof system based on the structural operation semantics (see Table \ref{tab:os-mcrl2}). 

\subsection{Translation of process expressions and valuations}
Recall that a specification in the process algebra with global variables consists of the following: a data domain $D$, a set of variable names $\Var{}$, a set of action labels $\Act$, a set of process names $\PN$ and their defining equations, a communication function $\gamma$, and an initial state consisting of a process expression and an initial valuation $V$. We consider a restricted grammar for the translation. 

\paragraph{Sequential components} The set of \emph{sequential process expressions} $\mathcal{P}_{Seq}$
is generated by the following grammar (with $v$ ranging over $\Var{}$, $d$ ranging over $D$, $X$ ranging over
$\PN$ and $\lambda$ ranging over $\TL$):
$$Seq:= Seq+Seq\ |\ (v = d) \xrightarrow{} Seq \ |\ \lambda.Seq\ |\ \delta\ |\ \lambda.X.$$ 

By a \emph{sequential recursive specification} $E$ we mean a set of defining equations
$X\defeqn t,$
with $t$ a sequential process expression, including precisely one such
equation for every $X\in\PN$.

\paragraph{Parallel-sequential processes}

Presupposing a sequential recursive specification
$E$, the set of \emph{parallel-sequential} process expressions $\mathcal{P}_{\mathit{Par}}$ over $E$ is
generated by the following grammar (with $X$ ranging over
$\PN$ and $Seq$ ranging over sequential process expressions):
$$Par:= Par \| Par\ |\ X \ |\ Seq.$$

We assume that the recursive specification $E$ is sequential and that the process expression under consideration for translation is of the shape $\partial_B(P)$, where $P$ is parallel-sequential process expression and $B \subseteq \Act$. For the sake of readability, in our explanations below we restrict our attention to the case that there is one global variable $g$. In Section \ref{sec:discussion} we explain how to generalise the translation to any number of variables. Now that the input for the translation is clear, we show how it is translated to mCRL2.

\noindent The value of global variables is tracked by a dedicated process $Globs$, defined below. \\
$Globs(d: D) =$\\
\indent $    checkG(d,true).Globs(d)$\\
\indent    $+ checkG(d,true)|checkG(d,true).Globs(d)$\\
\indent    $+ \Sigma_{new:D}. checkG(d,true)|assignG(g,new).Globs(new)$\\
\indent   $+ value(g,d).Globs(d);$\\
The process can communicate the current value of the global variable with a \texttt{checkG} action, of which the first parameter is of type $D$ and the second a constant of type $Bool$. It can perform a \texttt{checkG} action twice in a multi-action to facilitate informing two parallel processes in one step. Since our process algebra with global variables only allows handshaking communication there can never be more than two parallel processes that participate in a transition. It facilitates changing the value of the global variable with an \texttt{assignG} action, with one parameter of type $D$ carrying the new value. It can emit the current value of a global variable with a \texttt{value} action, with a single parameter of type $D$. 

\newcommand{\trans}{\ensuremath{\chi}}
\newcommand{\transP}{\ensuremath{\Psi}}
\newcommand{\CS}{\epsilon}
\newcommand{\powerset}{\ensuremath{\wp}}
We translate the recursive specification $E$ to a recursive mCRL2 specification $E'$, which includes defining equations for all the process names in $\PN$ and additionally a defining equation for $Globs$. Let $\powerset(A)$ denote the powerset of $A$. We introduce a function $\trans: \mathcal{P}_{Seq} \times \powerset(D) \rightarrow \mathcal{P}_{mCRL2}$, which we will define shortly. For every defining equation $X \defeqn t$ in $E$ there is a defining equation $X \defeqn \trans(t,\emptyset)$ in $E'$. The function \trans{} is defined below, where $\CS \subseteq D$ is a set of constraints on the global variable that is eventually transformed into an appropriate $checkP$ action.

\begin{center}
	\begin{tabular}{llp{11cm}}
$\trans(P_1 + P_2,\CS) $ & $=$ & $ \trans(P_1,\CS) + \trans(P_2,\CS)$\\
$\trans((g = d) \xrightarrow{} P_1,\CS) $ & $=$ & $ \trans(P_1,\CS \cup \{d\})$\\
$\trans(a.P_1,\CS) $ & $=$ & $ (\sum_{d_1:D} a|checkP(d_1,\bigwedge_{d \in \epsilon} d_1 \approx d)).\trans(P_1,\emptyset)$\\
$\trans(assign(g, d').P_1,\CS) $ & $=$ & $ (\sum_{d_1:D} assignP(g,d')|checkP(d_1,\bigwedge_{d \in \epsilon} d_1 \approx d)).\trans(P_1,\emptyset)$\\
$\trans(X,\CS)$ & $=$ & $X$\\
$\trans(\delta,\CS)$ & $=$ & $\delta$\\
	\end{tabular}
\end{center}

We define a set of communications $C_{\gamma}$, such that $a|b->c \in C_{\gamma}$ or $b|a->c \in C_{\gamma}$ if and only if $\gamma(a,b) = c$ (we should include only one of $a|b \rightarrow c$ and $b|a \rightarrow c$ in $C_\gamma$ to
satisfy the requirement that the left-hand sides of communications in $C_\gamma$ are disjoint). We define an extended set of communications that includes communications with $Globs$: $C_{G\gamma} =  C_{\gamma} \cup \{assignP|assignG->assign,checkP|checkG->check\}$. Given a set of encapsulated actions $B$ we define a set of allowed actions $A_B = (\Act\setminus B) \cup \{value,assign\}$. We extend \trans{} to parallel-sequential process expressions in the following way. 

\begin{center}
	\begin{tabular}{llp{11cm}}
$\trans(P_1||P_2,\emptyset) $ & $=$ & $ \trans(P_1,\emptyset)||\trans(P_2,\emptyset)$
	\end{tabular}
\end{center}

We translate the process expression $\partial_B(P)$, with an initial valuation $V, V(g) = d$, to the mCRL2 process expression $\allow{A_B}(\hide{\{check\}}(\comm{C_{G\gamma}}(\trans(P,\emptyset)||Globs(d))))$, which we abbreviate to $\transP(P,V)$. 

\subsection{Translation of formula}
\newcommand{\fLogic}[1]{\ensuremath{\theta(#1)}}
\newcommand{\fLogicInv}[1]{\ensuremath{\theta^{-1}(#1)}}
The selfloops labelled with $value$ provide information on the values of global variables in every state, which we will exploit in the translation of HML$^{check}$ formulas. Given a HML$^{check}$ formula we eliminate each occurrence of the check operator of the shape $(v = e)$ by substituting it with $\langle value(v,e) \rangle true$. We denote this substitution function with $\theta$, which we define inductively:\\
\begin{center}
	\begin{tabular}{lll}
		$\fLogic{true} $&=&$true$, \\
		$\fLogic{false} $&=&$false$, \\
		$\fLogic{v=e} $&=&$ \langle value(v,e) \rangle true, $\\
		$\fLogic{\neg\phi} $&=&$ \neg\fLogic{\phi},$\\
		$\fLogic{\phi_1 \land \phi_2} $&=&$ \fLogic{\phi_1} \land \fLogic{\phi_2}, $\\
		$\fLogic{\phi_1 \vee \phi_2} $&=&$\fLogic{\phi_1} \vee \fLogic{\phi_2}, $\\
		$\fLogic{\langle T \rangle \phi} $&=&$ \langle T \rangle \fLogic{\phi}, $\\
		$\fLogic{[T] \phi} $&=&$ [T]\fLogic{\phi}$.
	\end{tabular}
\end{center}

\subsection{Correctness of translation}
From here on, when we consider the translation of some state $\state{P}{V}$ to $\transP(P,V)$ we assume that the context of the process expression, such as the data domain $D$, the set of actions and a recursive specification have been encoded in mCRL2 as described in the previous section. 

We will prove that a $HML^{check}$ formula $\phi$ holds in a state $\state{P}{V}$ if and only if $\fLogic{\phi}$ holds for $\transP(P,V)$. To achieve this we use a stepping stone. In Definition \ref{def:variable-consistent} we define a relation between LTSs with and without a valuation function in the state, called \emph{variable consistency}. We prove that the LTSs induced by $\state{P}{V}$ and $\transP(P,V)$ are variable consistent, which we use in Theorem \ref{thm:satisfy-same-formula} to prove that any $HML^{check}$ formula $\phi$ holds for $\state{P}{V}$ if and only if $\fLogic{\phi}$ holds for $\transP(P,V)$
\newcommand{\linkState}[1]{\ensuremath{\ell(#1)}}
\newcommand{\linkStateInv}[1]{\ensuremath{Link^{-1}(#1)}}

\begin{definition}\label{def:variable-consistent}
	Let $\mathcal{L}_1=(S_1,TL_1,\rightarrow_1,s_1)$ be an LTS such that $S_1= \mathcal{P}
	\times\mathcal{V}$, and let $\mathcal{L}_2=(S_2,TL_2,\rightarrow_2,s_2)$ be an LTS such that $S_2 = \mathcal{P}_{mCRL2}$.
	We say that $\mathcal{L}_2$ is \emph{variable-consistent} with $\mathcal{L}_1$ if there exists a
	mapping $\ell: S_1\rightarrow S_2$ such that whenever some state $s_1’$ is reachable from $s_1$ in $\mathcal{L}
	_1$, then $\ell(s_1’)$ is reachable from $\ell(s_1)$ in $\mathcal{L}_2$ and
	\begin{enumerate}
		\item for all states $s_1',s_2' \in S_2$ reachable from $s_2$ and such that $s_1' \step{\lambda} s_2'$ we have that $ \lambda \in TL \cup \{value(v,d) \mid v\in \Var{} \land d\in D\}$,
		\item for all $\langle P,V \rangle \in S_1, s' \in S_2, v \in \Var{}, d \in D$ we have that $\linkState{\langle P,V \rangle} \xrightarrow{value(v,d)} s'$ if and only if $V(v) = d$ and $\linkState{\langle P,V \rangle} = s'$,
		\item for all $\lambda \in \TL{}_1$ and reachable states $s_1', s_2' \in S_1$ we have that $s_1' \xrightarrow{\lambda} s_2'$ if and only if $\linkState{s_1'} \xrightarrow{\lambda} \linkState{s_2'}$.
	\end{enumerate}
\end{definition}

For the first property of variable consistency we prove the following lemma.
\begin{lemma}\label{lem:consistency-1}
	For all parallel-sequential process expressions $P$, process expression $P' \in \mathcal{P}_{mCRL2}$, valuations $V$, $\alpha \in \multiact$ and $B\subseteq \Act$ we have that $\transP(P,V) \step{\alpha} P'$ implies $\alpha \in TL \cup \{value(g,d) \mid d\in D\}$.
\end{lemma}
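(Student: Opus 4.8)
Write $\transP(P,V)=\allow{A_B}(\hide{\{check\}}(\comm{C_{G\gamma}}(\trans(P,\emptyset)\pcmp Globs(d))))$, where $V(g)=d$ and $A_B=(\Act\setminus B)\cup\{value,assign\}$. The plan is to peel off the three outermost operators using the rules of Table~\ref{tab:os-mcrl2}. Any transition $\transP(P,V)\step{\alpha}P'$ arises, via \RAllow, \RHide and \RComm, from a transition $\trans(P,\emptyset)\pcmp Globs(d)\step{\alpha_0}Q_0$ with $\alpha=\hideF{\{check\}}(\commF{C_{G\gamma}}(\alpha_0))$ and, by the side condition of \RAllow, the name-multiset of $\alpha$ must lie in $A_B$. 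Since every element of $A_B$ is a singleton, $\alpha$ must be a \emph{single} action whose name is in $(\Act\setminus B)\cup\{value,assign\}$. So it suffices to enumerate the labels $\alpha_0$ that $\trans(P,\emptyset)\pcmp Globs(d)$ can carry and to check that, after applying $\commF{C_{G\gamma}}$ and then deleting all $check$-components, a single action with such a name comes out only when it belongs to $\TL\cup\{value(g,d)\mid d\in D\}$.

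The next step is to determine the ``building blocks'' of $\alpha_0$. By structural induction on sequential process expressions — using guardedness of $E$ to handle process names, whose bodies in $E'$ are $\trans(t,\emptyset)$ for guarded sequential $t$ and whose transitions therefore come, via \RRec, from transitions of $\trans(t,\emptyset)$ — every transition of $\trans(S,\emptyset)$ for sequential $S$ carries a label of the form $\multiset{a:1,\ checkP(\cdot,\cdot):1}$ with $a\in\Act$, or $\multiset{assignP(g,\cdot):1,\ checkP(\cdot,\cdot):1}$: the clauses for $+$ and $(g=d)\to\cdot$ merely pass a transition through, and the two prefix clauses of $\trans$ produce exactly these two shapes. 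Hence, for parallel-sequential $P$, every transition of $\trans(P,\emptyset)$ carries a label that is a multiset sum of finitely many such two-element blocks, one per participating sequential component, so it contains exactly one $checkP$ per such component. On the other side, $Globs(d)$ can perform only one of $checkG(d,true)$, $checkG(d,true)\,|\,checkG(d,true)$, $checkG(d,true)\,|\,assignG(g,new)$ for some $new\in D$, or $value(g,d)$.

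A finite case analysis then finishes the argument. By \RParl, \RParr and \RPar, $\alpha_0$ is the multiset sum of a (possibly empty) contribution of $\trans(P,\emptyset)$ and a (possibly empty) contribution of $Globs(d)$; since every sequential block contains a non-$checkP$ token while $Globs(d)$ never produces a $checkP$, the label $\hideF{\{check\}}(\commF{C_{G\gamma}}(\alpha_0))$ is never the empty multi-action, so \RAllow admits it only if it is a single action with name in $A_B$. The only communications in $C_{G\gamma}$ are the handshakes $checkP|checkG\to check$, $assignP|assignG\to assign$ and $a|b\to\gamma(a,b)$, each firing only on equal parameters; unmatched $value$-, $checkP$-, $checkG$-, $assignP$- and $assignG$-tokens are never removed. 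Because $Globs(d)$ supplies at most two $checkG$-tokens, at most two components can have their $checkP$ matched and hidden, and a short counting argument then shows that $\hideF{\{check\}}(\commF{C_{G\gamma}}(\alpha_0))$ is a single admitted action only in one of: (i) $Globs$ does $value(g,d)$ and no component fires, giving $\alpha=value(g,d)$; (ii) one component fires $\multiset{a:1,checkP(d,true):1}$ and $Globs$ does $checkG(d,true)$, giving $\alpha=a\in\Act\setminus B$; (iii) two components fire $\multiset{a_i:1,checkP(d,true):1}$ with $\gamma(a_1,a_2)$ defined and $Globs$ does $checkG(d,true)|checkG(d,true)$, giving $\alpha=\gamma(a_1,a_2)\notin B$; (iv) one component fires $\multiset{assignP(g,d'):1,checkP(d,true):1}$ and $Globs$ does $checkG(d,true)|assignG(g,d')$, giving $\alpha=assign(g,d')$. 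In every other combination an auxiliary action or a second payload survives and \RAllow blocks the transition. Since $\Act\subseteq\TL$, $\gamma(a_1,a_2)\in\Act\subseteq\TL$ and $assign(g,d')\in\TL$, in each surviving case $\alpha\in\TL\cup\{value(g,d)\mid d\in D\}$, as required.

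I expect the counting step to be the main obstacle: it requires arguing rigorously that unmatched $checkP$/$checkG$/$assignP$/$assignG$ tokens cannot all disappear — hence force either a non-single label or a name outside $A_B$ — which hinges on the precise semantics of $\commF{C_{G\gamma}}$ deferred to \cite{sv_book}, in particular that communications fire only on matching parameters and that the handshake restriction on $\gamma$ rules out any cascade of further communications. The remainder is routine structural induction and bookkeeping.
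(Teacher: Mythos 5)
Your argument is correct, but it takes a very different — and far more laborious — route than the paper. The paper's entire proof is one sentence: the outermost operator of $\transP(P,V)$ is $\allow{A_B}$ with $A_B=(\Act\setminus B)\cup\{value,assign\}$, and by the side condition of \RAllow{} ($\ds{\underline{\alpha}}\in M$) only single actions whose name lies in that set can survive, which immediately places $\alpha$ in $TL\cup\{value(g,d)\mid d\in D\}$. You instead reconstruct the full inventory of transitions of $\trans(P,\emptyset)\pcmp Globs(d)$, trace them through $\commF{C_{G\gamma}}$ and $\hideF{\{check\}}$, and enumerate the four surviving shapes. That is sound, and it does patch one small imprecision in the paper's one-liner: $\nabla_{A_B}$ filters by action \emph{name} only, so strictly one must still check that a surviving action named $assign$ or $value$ carries parameters of the form $(g,d')$ — which your parameter-matching argument for the communications with $Globs$ supplies, and which the paper leaves implicit. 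On the other hand, most of your case analysis (which components fire, how many $checkP$ tokens are matched, why $\tau$ cannot emerge) is not needed for this lemma; it is essentially the content the paper defers to Lemmas~\ref{lem:seq-end-in-translation}--\ref{lem:consistency-3}, so you are proving a characterisation where only an upper bound on the labels is asked for. If you keep your version, the "counting step" you flag as the main obstacle can simply be dropped: you never need to show that bad combinations are blocked for any particular reason beyond the name-filter of $\allow{A_B}$.
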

\begin{proof}
	This follows immediately from the allow operator in $\transP(P,V)$, which does not allow multi-actions that are not in $TL \cup \{value(g,d) \mid d\in D\}$.
\end{proof}

\noindent Towards proving the second property of variable consistency we prove the following lemma.
\begin{lemma}\label{lem:consistency-2}
	For all parallel sequential process expressions $P$, process expression $P' \in \mathcal{P}_{mCRL2}$, valuation $V$, $d\in D$, $B \subseteq \Act$ we have that $\transP(P,V) \xrightarrow{value(g,d)} P'$ if and only if $V(g) = d$ and $\transP(P,V) = P'$.
\end{lemma}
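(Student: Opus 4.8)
The plan is to prove Lemma~\ref{lem:consistency-2} by a direct analysis of the operational semantics of the mCRL2 translation $\transP(P,V) = \allow{A_B}(\hide{\{check\}}(\comm{C_{G\gamma}}(\trans(P,\emptyset)\|Globs(d))))$, using the fact that the only source of a $value$-labelled transition is the summand $value(g,d).Globs(d)$ of the $Globs$ process. First I would unfold the semantics top-down: a $value(g,d)$ transition of $\transP(P,V)$ must, by rule \RAllow, come from a $value(g,d)$ transition of $\hide{\{check\}}(\comm{C_{G\gamma}}(\trans(P,\emptyset)\|Globs(d)))$; since $value \notin \{check\}$, by \RHide this must come from a $value(g,d)$ transition of $\comm{C_{G\gamma}}(\trans(P,\emptyset)\|Globs(d))$; since $\commF{C_{G\gamma}}$ does not rename or combine a bare $value$ action (the communications in $C_{G\gamma}$ have left-hand sides built from $assignP|assignG$ and $checkP|checkG$, not $value$), by \RComm this must come from a $value(g,d)$ transition of $\trans(P,\emptyset)\|Globs(d)$.

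Next I would show that this parallel transition cannot originate in the $\trans(P,\emptyset)$ component: inspecting the definition of $\trans$, every action-prefix summand it produces carries a multi-action of the form $a|checkP(\dots)$ or $assignP(g,d')|checkP(\dots)$, so no reachable summand of $\trans(P,\emptyset)$ emits a multi-action whose underlying name multiset is $\multiset{value:1}$; hence neither \RParl{} nor \RComm{} (communication with $Globs$) can yield a $value(g,d)$ label from the left component. Therefore, by \RParr{}, the transition comes from $Globs(d) \xrightarrow{value(g,d)} Globs(d)$, which by inspection of the defining equation of $Globs$ is the only $value$-transition available, and it requires the parameter to be exactly $d = V(g)$. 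This simultaneously forces the label to be $value(g,V(g))$ (establishing the ``only if'' direction of the condition $V(g)=d$) and shows the target is $\allow{A_B}(\hide{\{check\}}(\comm{C_{G\gamma}}(\trans(P,\emptyset)\|Globs(d)))) = \transP(P,V)$, since the process-algebra part is unchanged. For the converse direction, given $V(g)=d$ I would simply exhibit the derivation: apply the $value(g,d).Globs(d)$ summand of $Globs$, lift it through \RParr, \RComm, \RHide, \RAllow{} (checking $\ds{\underline{value(g,d)}} = \multiset{value:1} \in M$ where $M$ is the allow set $A_B$, which indeed contains $value$), obtaining $\transP(P,V) \xrightarrow{value(g,d)} \transP(P,V)$.

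The main obstacle I anticipate is being careful about what ``reachable'' means: the lemma as stated is about the fixed process expression $\transP(P,V)$, but to be fully rigorous one wants that along the way the $Globs$ parameter and the process-algebra component stay in the expected syntactic shape, and in particular that $Globs$ always appears with exactly one $value$-summand whose parameter equals its argument. A clean way to handle this is to observe the structural invariant that any process expression reachable from $\transP(P,V)$ is of the form $\allow{A_B}(\hide{\{check\}}(\comm{C_{G\gamma}}(Q\|Globs(d'))))$ for some $Q$ reachable from $\trans(P,\emptyset)$ and some $d' \in D$ — but since the present lemma only concerns $\transP(P,V)$ itself (not its reachable successors), I can avoid this and keep the proof to the single case analysis above. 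The only genuinely delicate point is checking that the $\commF{C_{G\gamma}}$ function leaves a solitary $value(g,d)$ action untouched, which follows because a communication $m \to c$ fires only when all action labels in its left-hand multiset $m$ are present with matching data parameters, and $value$ does not occur in any left-hand side of $C_{G\gamma}$.
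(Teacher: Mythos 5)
Your proof is correct and follows essentially the same route as the paper's: the paper's (much terser) argument likewise observes that $Globs(d)$ is the only parallel component able to produce a $value$ action, that its only $value$ transition is the self-loop $value(g,V(g))$, and that this label passes unchanged through the communication, hiding and allow operators. Your version merely spells out the top-down unfolding of the operators and the inspection of $\trans$ that the paper leaves implicit.
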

\begin{proof}
	The process expression $\transP(P,V)$ contains a parallel component $Globs(d)$. The $Globs$ process can make a $value(g,d)$ transition where $V(g) =d$. Moreover, all such $value(g,d)$ transitions are self-loops. Finally, the $Globs$ process is the only sub process in $\transP(P,V)$ that is able to produce a $value$ transition.
\end{proof}

\noindent Towards proving the third property of variable consistency we first provide a number of auxiliary lemmas.

\begin{lemma}\label{lem:seq-end-in-translation}
	For all sequential process expressions $P$, process expression $P' \in \mathcal{P}_{mCRL2}$, $\lambda \in \Act \cup \{assignP(g,d) \mid d \in D\}$, $d_1 \in D$ we have that $\trans(P,\emptyset) \xrightarrow{\lambda|checkP(d_1,true)} P'$ implies that there exists $P''$ such that $\trans(P'',\emptyset) = P'$.
\end{lemma}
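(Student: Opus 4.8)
The plan is to proceed by induction on the structure of the sequential process expression $P$, examining for each case which mCRL2 transitions $\trans(P,\emptyset)$ can perform with a label of the form $\lambda|checkP(d_1,true)$, and checking that the target of every such transition is syntactically of the form $\trans(P'',\emptyset)$ for some sequential $P''$. The key observation driving the whole argument is that the definition of $\trans$ only ever produces action-prefix terms of the two shapes $(\sum_{d_1:D} a|checkP(d_1,\bigwedge_{d\in\epsilon} d_1\approx d)).\trans(P_1,\emptyset)$ and $(\sum_{d_1:D} assignP(g,d')|checkP(d_1,\bigwedge_{d\in\epsilon} d_1\approx d)).\trans(P_1,\emptyset)$, and in both cases the continuation after the prefix is \emph{already} of the form $\trans(P_1,\emptyset)$ with the constraint set reset to $\emptyset$; the other clauses of $\trans$ (choice, conditional, $X$, $\delta$) either push $\trans$ inward or bottom out.

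Concretely, I would handle the cases as follows. For $P=\delta$ we have $\trans(\delta,\emptyset)=\delta$, which has no outgoing transitions, so the implication holds vacuously. For $P=\lambda'.X$ or $P=\lambda'.Seq$ (action-prefix cases), $\trans(P,\emptyset)$ is a sum-prefixed multi-action term; the only transitions it can perform arise from \RSum{} followed by \RPref{}, and by \RPref{} the target is exactly the continuation $\trans(P_1,\emptyset)$ (where $P_1$ is $X$ or the body $Seq$, both sequential), so we take $P''=P_1$. Note the label of such a transition is $\ds{a|checkP(d_1,b)} = \multiset{a,checkP(d_1,b)}$, which matches the required form $\lambda|checkP(d_1,true)$ precisely when the Boolean $\bigwedge_{d\in\epsilon}d_1\approx d$ evaluates to $true$; either way the target is $\trans(P_1,\emptyset)$, so the conclusion does not actually depend on the second component being $true$. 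For the conditional $P=(g=d)\rightarrow Seq$, we have $\trans(P,\emptyset)=\trans(Seq,\emptyset\cup\{d\})=\trans(Seq,\{d\})$; here I would need a mild strengthening of the statement to carry the induction through a nonempty constraint set — namely, prove instead that $\trans(P,\epsilon)\xrightarrow{\lambda|checkP(d_1,true)}P'$ implies $P'=\trans(P'',\emptyset)$ for some sequential $P''$, for \emph{every} $\epsilon\subseteq D$ — and then the conditional case is just the induction hypothesis applied to $Seq$ with the enlarged constraint set. For the choice $P=Seq_1+Seq_2$, $\trans(P,\epsilon)=\trans(Seq_1,\epsilon)+\trans(Seq_2,\epsilon)$ and any transition comes, via \RSuml{} or \RSumr{}, from a transition of one summand, so the induction hypothesis applies directly. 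Finally $P=X$ is impossible under this grammar as a standalone sequential expression except as a prefix continuation, which is already covered; if one does treat $\trans(X,\epsilon)=X$, a transition of $X$ unfolds via \RRec{} to a transition of $\trans(t,\emptyset)$ where $X\defeqn t$ and $t$ is sequential, again covered by the induction hypothesis.

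The main obstacle I anticipate is purely bookkeeping rather than conceptual: getting the induction statement right so that it survives passing through the conditional clause, which replaces the constraint set $\emptyset$ by a nonempty set — hence the generalisation to arbitrary $\epsilon\subseteq D$ sketched above. A secondary point of care is the label analysis: one must be sure that $\lambda\in\Act\cup\{assignP(g,d)\mid d\in D\}$ together with a $checkP(d_1,true)$ component can only be produced by the prefix clauses of $\trans$ and not spuriously by, say, combining transitions of two summands — but since $\trans$ on sequential expressions never introduces a parallel composition, the \RPar{} and \RComm{} rules are never applicable, so no genuine multi-action merging occurs beyond what is already baked into the prefixes, and this is immediate once the shape of $\trans$'s output is spelled out.
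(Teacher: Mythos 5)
Your proposal is correct and follows essentially the same route as the paper: both arguments rest on the observation that $\trans(P,\emptyset)$ unfolds (through choice and conditional clauses) into a sum of prefixed terms whose continuations are already of the form $\trans(P_1,\emptyset)$, so any $\lambda|checkP(d_1,true)$-transition necessarily lands in such a translation. The paper states this shape observation in one line where you make it explicit via structural induction with the constraint set generalised to arbitrary $\epsilon$, but the underlying idea is identical.
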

\begin{proof}
	From the definition of $\trans$ it follows that if $\trans(P,\emptyset)$ can make a $\lambda|checkP(d_1,true)$ labelled transition then there exists some $Q$ and $P_1$ such that $\trans(P,\emptyset) = Q + (\sum_{d_1:D} \lambda|checkP(d_1,\bigwedge_{d \in \epsilon} d_1 \approx d)).\trans(P_1,\emptyset)$. Hence after making the $\lambda|checkP(d_1,true)$ labelled transition we end up in $\trans(P_1,\emptyset)$.
\end{proof}

\begin{lemma}\label{lem:parseq-end-in-translation}
	For all parallel-sequential process expressions $P$, $\alpha \in \multiact$ we have that $\transP(P,V) \xrightarrow{\alpha} P'$ implies that there exists $P''$ and $V'$ such that $\transP(P'',V') = P'$.
\end{lemma}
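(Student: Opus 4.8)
The plan is to prove this by structural induction on the parallel-sequential process expression $P$, mirroring the grammar $Par ::= Par \| Par \mid X \mid Seq$. Recall that $\transP(P,V)$ unfolds to $\allow{A_B}(\hide{\{check\}}(\comm{C_{G\gamma}}(\trans(P,\emptyset)\|Globs(d))))$ where $V(g)=d$, so a transition $\transP(P,V)\step{\alpha}P'$ must, by inspecting the SOS rules for $\nabla$, $\tau_I$, $\Gamma_C$ and $\|$ (Table~\ref{tab:os-mcrl2}) from the outside in, arise from an underlying transition $\trans(P,\emptyset)\|Globs(d) \step{\beta} Q$ for some multi-action $\beta$ and some $Q$, with $P'$ of the form $\allow{A_B}(\hide{\{check\}}(\comm{C_{G\gamma}}(Q)))$ and $\alpha$ obtained from $\beta$ by applying $\gamma_{C_{G\gamma}}$ and $\theta_{\{check\}}$. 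So the crux is to show $Q$ has the form $\trans(P'',\emptyset)\|Globs(d')$ for some parallel-sequential $P''$ and some $d' \in D$; then taking $V' = V[g \mapsto d']$ gives $\transP(P'',V') = P'$ as required.

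First I would handle the base cases. For $P = Seq$ a sequential expression, the transition $\trans(Seq,\emptyset)\|Globs(d)\step{\beta}Q$ comes (via \RParl, \RParr, or \RPar) either from $Globs(d)$ alone, from $\trans(Seq,\emptyset)$ alone, or from a synchronisation of the two. The key observation — which I would extract as an auxiliary fact about $\trans$ on sequential expressions, essentially strengthening Lemma~\ref{lem:seq-end-in-translation} to drop the requirement that the $checkP$ flag be $true$ — is that any transition out of $\trans(Seq,\emptyset)$ leads to a state of the form $\trans(Seq'',\emptyset)$ with $Seq''$ sequential (this uses the shape of the $\trans$ clauses: every summand is a prefixed term $(\sum_{d_1:D}\lambda|checkP(\dots)).\trans(P_1,\emptyset)$, or reduces under $\rightarrow$-elimination to a recursive call, so after one step we land in some $\trans(P_1,\emptyset)$ or in $X$ or $\delta$, all of which equal $\trans$ of a sequential expression). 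Combined with the fact that $Globs(d) \step{} Globs(d')$ for an appropriate $d'$ (reading off its four summands: $checkG$ self-loops keeping $d$, an $assignG$ branch moving to $new$, and the $value$ self-loop), this closes the sequential case. But we also need the synchronisation to be consistent: when $\trans(Seq,\emptyset)$ contributes $\lambda|checkP(d_1,b)$ and $Globs(d)$ contributes the matching $checkG(d,true)$ or $assignG(g,new)$, the communication $C_{G\gamma}$ fires $checkP|checkG\to check$ (then hidden) or $assignP|assignG\to assign$, and the parameter-matching requirement of $\gamma_C$ forces $d_1 = d$, i.e. the guard is satisfied — I would note this but not belabour it, as it is exactly the design intent of $Globs$. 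For $P = X$ a process name, \RRec reduces $\transP(X,V)$ to $\transP(t,V)$ where $X\defeqn t$ in $E$ with $t$ sequential, so this folds into the sequential case.

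For the inductive step $P = P_1 \| P_2$, we have $\trans(P_1\|P_2,\emptyset) = \trans(P_1,\emptyset)\|\trans(P_2,\emptyset)$, so the underlying transition of $\trans(P_1,\emptyset)\|\trans(P_2,\emptyset)\|Globs(d)$ decomposes, via the associativity of $\|$ and its SOS rules, into contributions from some subset of the three components. The main obstacle here — and the one step I would flag as requiring care — is the bookkeeping around the handshaking restriction: because $\gamma$ only permits two-party synchronisation and $Globs$'s $checkG|checkG$ summand is precisely what lets one $assign$/$check$-step inform two sequential components, I must check that every combination of contributors that survives the $\allow{A_B}$ filter yields a residual of the correct shape. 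In each case the induction hypothesis applied to $P_1$ and/or $P_2$ (or the sequential sub-case for components that did not move, which stay syntactically fixed) gives residuals $\trans(P_i'',\emptyset)$ with $P_i''$ parallel-sequential, and $Globs(d)$ moves to $Globs(d')$; setting $P'' = P_1''\|P_2''$ and $V' = V[g\mapsto d']$ completes the argument. The only genuinely delicate point is ensuring $d'$ is well-defined, i.e. that at most one component performed an $assign$ in the step — but this is guaranteed because $assignP|assignP$ is not in $C_{G\gamma}$ and the allow set $A_B$ does not contain any multi-action with two $assign$-derived names, so a multi-assignment step is simply blocked.
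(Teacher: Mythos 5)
Your plan is correct and would yield a valid proof, but it is organised differently from the paper's argument. The paper does not use structural induction on $P$: it first invokes Lemma~\ref{lem:consistency-1} to conclude $\alpha \in TL \cup \{value(g,d) \mid d \in D\}$, dispatches the $value$ case as a self-loop, and then argues directly that every contribution of a parallel component has the shape $\lambda|checkP(d_1,b)$, that the allow operator forces the $checkP$ to synchronise with a $checkG$ and hence forces $b = true$, so that Lemma~\ref{lem:seq-end-in-translation} applies \emph{as stated} to each contributing component; non-contributing components and the operator stack are unchanged, and $Globs$ either stays put or moves to $Globs(d')$. You instead propose to strengthen Lemma~\ref{lem:seq-end-in-translation} by dropping the $true$-flag hypothesis (which is indeed sound, since every summand produced by $\trans$ is prefixed and lands in some $\trans(P_1,\emptyset)$ regardless of the flag) and to recurse over the grammar of parallel-sequential expressions. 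Both routes work; the paper's is shorter because the allow-operator observation makes the flag a non-issue, while yours makes the decomposition of the multi-action across components more explicit and carefully addresses the well-definedness of $d'$, which the paper glosses over. One organisational point to fix if you write this out: your induction hypothesis cannot literally be the lemma statement about $\transP(P_i,V)$, since in the inductive step the components $\trans(P_i,\emptyset)$ occur without the $\allow{A_B}$, $\hide{\{check\}}$, $\comm{C_{G\gamma}}$ and $Globs$ wrapper; the induction should be carried out on the auxiliary claim that every transition of the bare $\trans(P,\emptyset)$ leaves each component in the image of $\trans(\cdot,\emptyset)$, which is exactly your strengthened version of Lemma~\ref{lem:seq-end-in-translation} extended along the $\|$ clause, and the wrapper is then peeled off once at the outermost level as you describe.
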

\begin{proof}
	By Lemma \ref{lem:consistency-1} we conclude that $\alpha \in TL \cup \{value(g,d) \mid d\in D\}$. In the case that $\alpha$ is a $value$ transition it is a selfloop and ends in $\transP(P,V)$. In any other case $\transP(P,V)$ makes a step that includes a contribution from one or more of the parallel components of $P$. From the definition of $\trans$ it follows that any contribution of a parallel component is of the shape $\lambda|checkP(d_1,b)$, where $d_1 \in D$ and $b \in \{true,false\}$. The $checkP$ must communicate with a $checkG$, otherwise the action will be blocked by the allow operator. Hence $b=true$, enabling us to use Lemma \ref{lem:seq-end-in-translation} to conclude that for every parallel component contributing to $\alpha$ there exists some process expression $P_{seq}$ such that the parallel component ends in $\trans(P_{seq},\emptyset)$. The parallel components of $\transP(P,V)$ that do not contribute to $\alpha$ remain in a shape such that there exists some process expression $P_{seq}$ such that the parallel component is $\trans(P_{seq},\emptyset)$. The $Globs$ process remains unchanged or its valuation is updated, in which case there exists some valuation $V'$ that reflects the updated value. The allow, hide and communication operators remain unchanged. Hence, after any $\alpha$ step $\transP(P,V)$ ends in a state $\transP(P'',V')$.
\end{proof}

\begin{lemma}\label{lem:relation-sequential}
	For all sequential process expressions $P,P'$, $DD \subseteq \mathcal{D}$, $a \in \Act$ and $assign(g,d')\in\TL$ we have that $\forall_{V \in \mathcal{V}} (\bigwedge_{d \in DD} V(g) = \ds{d}) \implies \state{P}{V} \step{a} \state{P'}{V}$ if and only if $\exists_{d_1 \in D}\trans(P,\emptyset) \xrightarrow{a|checkP(d_1,true)} \trans(P',\emptyset) \land \bigwedge_{d \in DD} d_1 = \ds{d}$ and we have that $\forall_{V \in \mathcal{V}} (\bigwedge_{d \in DD} V(g) = \ds{d}) \implies \state{P}{V} \xrightarrow{assign(g,d')} \state{P'}{V[g \mapsto d']}$ if and only if $\exists_{d_1 \in D}\trans(P,\emptyset) \xrightarrow{assignP(g,d')|checkP(d_1,true)} \trans(P',\emptyset) \land \bigwedge_{d \in DD} d_1 = \ds{d}$.
\end{lemma}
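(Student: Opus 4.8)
The plan is to prove both biconditionals simultaneously by structural induction on the sequential process expression $P$. The two statements are completely parallel: an $a$-labelled transition of $\state{P}{V}$ leaves $V$ unchanged and is mirrored on the mCRL2 side by a multi-action whose name is $a$, whereas an $assign(g,d')$-labelled transition updates $V$ to $V[g\mapsto d']$ and is mirrored by a multi-action whose name is $assignP(g,d')$; in both cases the multi-action also carries a $checkP$ component whose Boolean argument, built with $\approx$, records the guards passed on the way to the prefix. So I would let $\lambda$ range over $\{a,\,assign(g,d')\}$ and argue about both cases at once.

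Before starting the induction I would strengthen the claim so that it survives the conditional case. Since $\trans((g=d)\xrightarrow{}P_1,\CS)=\trans(P_1,\CS\cup\{d\})$, the translation of a guarded expression is the translation of its body called with an \emph{enlarged} constraint set, so the induction hypothesis for $P_1$ must be available for every $\CS\subseteq D$, not just for $\CS=\emptyset$. I would therefore prove: for all $\CS\subseteq D$, sequential $P,P'$ and $d_1\in D$ (writing $V$ for the unique valuation with $V(g)=d_1$, which is well defined because there is a single global variable $g$),
\[
  \Bigl(\CS\subseteq\{d_1\}\ \text{ and }\ \state{P}{V}\step{\lambda}\state{P'}{V'}\Bigr)\ \Longleftrightarrow\ \trans(P,\CS)\xrightarrow{\,\lambda|checkP(d_1,true)\,}\trans(P',\emptyset),
\]
with $V'=V$ in the $a$-case and $V'=V[g\mapsto d']$ in the $assign$-case. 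The lemma as stated then follows by taking $\CS=\emptyset$, pulling the quantifier over $d_1$ to the front, and rewriting the coupling ``$\CS\subseteq\{d_1\}$ and $V(g)=d_1$'' in terms of the set $DD\subseteq\mathcal{D}$ of data expressions whose interpretation $\ds{\cdot}$ equals $V(g)$, which is exactly the role played by $\bigwedge_{d\in DD}d_1=\ds{d}$.

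The induction then runs over the grammar of $\mathcal{P}_{Seq}$. For $P=\delta$ both sides are trivially false. For $P=P_1+P_2$ a transition comes from $P_1$ or $P_2$ by \RSuml/\RSumr, and $\trans(P_1+P_2,\CS)=\trans(P_1,\CS)+\trans(P_2,\CS)$ splits the same way, so the claim follows from the two induction hypotheses at the same $\CS$. For $P=(g=d)\xrightarrow{}P_1$, rule \RCon reduces the transition to ``$V(g)=d$ and $\state{P_1}{V}\step{\lambda}\state{P'}{V'}$'', while $\trans(P,\CS)=\trans(P_1,\CS\cup\{d\})$; applying the induction hypothesis to $P_1$ with $\CS\cup\{d\}$ and observing that $\CS\cup\{d\}\subseteq\{d_1\}$ iff ($\CS\subseteq\{d_1\}$ and $d=d_1$) matches the extra premise exactly. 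The real work is in the prefix cases $P=\lambda'.P_1$ and $P=\lambda'.X$: one unfolds $\trans(\lambda'.P_1,\CS)=(\sum_{d_1:D}\lambda'|checkP(d_1,\bigwedge_{d\in\CS}d_1\approx d)).\trans(P_1,\emptyset)$, instantiates the sum at $d_1$ by \RSum, and reads off via \RPref/\RAsgn and the definition of $\ds{\cdot}$ the transition labelled $\lambda'|checkP(d_1,\ds{\bigwedge_{d\in\CS}d_1\approx d})$ into $\trans(P_1,\emptyset)$; this Boolean is $true$ precisely when $d_1=d$ for every $d\in\CS$, i.e.\ $\CS\subseteq\{d_1\}$, whereas on the process-algebra side \RPref (resp.\ \RAsgn) gives the unconditional $\state{\lambda'.P_1}{V}\step{\lambda'}\state{P_1}{V'}$ with residual exactly $P_1$, so the two sides agree (and are both false when $\lambda'\neq\lambda$ or when $P'$ is not this residual); the variant $\lambda'.X$ is identical using $\trans(X,\emptyset)=X$.

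The step I expect to be the main obstacle is pinning the strengthened $\CS$-parametric statement down precisely enough to thread through the conditional case while still specialising back to the lemma as written — in particular the bookkeeping between the accumulated constraint set $\CS\subseteq D$ and the set $DD\subseteq\mathcal{D}$ of data \emph{expressions} via $\ds{\cdot}$, and the degenerate situation where the accumulated guards (or the elements of $DD$) are mutually inconsistent, so that the relevant $checkP$ Boolean is unsatisfiable and the quantified statements become vacuous; this matching is cleanest when $DD$ is nonempty and $\ds{\cdot}$-consistent, which is how the lemma is invoked, the value $d_1$ being pinned down by the $checkG$ action of the $Globs$ component once the communication $C_{G\gamma}$ is applied. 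Everything else is a routine unfolding of $\trans$ against the SOS rules of Tables~\ref{tab:os} and~\ref{tab:os-mcrl2}.
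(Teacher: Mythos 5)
Your proposal is correct and follows essentially the same route as the paper, which proves the lemma by structural induction on $P$ with the key observation that the Boolean field of the $checkP$ action is true exactly when the data value in its first field satisfies the guards accumulated by $\trans$. Your explicit strengthening of the induction hypothesis to arbitrary constraint sets $\CS$ (needed because the conditional case recurses into $\trans(P_1,\CS\cup\{d\})$) is left implicit in the paper's one-paragraph sketch but is precisely the right way to make its induction go through.
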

\begin{proof}
	This can be proven by induction on the structure of $P$, the induction hypothesis is that the bi-implication holds for every direct subprocess of $P$ and for every defining equation of process names. The key insight is the second field of the $checkP$ action is only true when the condition for the data value in the first field of $checkP$, constructed by $\trans$, is true.
\end{proof}

\begin{lemma}\label{lem:step-par-seq}
	For any parallel-sequential process expression $P$, process expression $P'$, $\lambda\in\TL$ and valuations $V,V'$ we have that $\state{\partial_B(P)}{V} \step{\lambda} \state{\partial_B(P')}{V'}$ implies $P'$ is again a parallel-sequential process expression.
\end{lemma}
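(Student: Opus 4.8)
The plan is to prove this by structural induction on the parallel-sequential process expression $P$, following the grammar $Par ::= Par \pcmp Par \mid X \mid Seq$. The statement to establish is that any $\lambda$-labelled transition $\state{\partial_B(P)}{V} \step{\lambda} \state{\partial_B(P')}{V'}$ forces $P'$ to again lie in $\mathcal{P}_{\mathit{Par}}$. The first observation is that a transition of $\state{\partial_B(P)}{V}$ can only arise via rule \REnc{} from an underlying transition $\state{P}{V} \step{\lambda} \state{P'}{V'}$ with $\lambda \notin B$, so it suffices to show that $\state{P}{V} \step{\lambda} \state{P'}{V'}$ with $P$ parallel-sequential implies $P'$ parallel-sequential; the encapsulation wrapper is then simply re-applied.

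Next I would handle the three syntactic cases. If $P = X \in \PN$, then by rule \RRec{} the transition comes from $\state{Q}{V} \step{\lambda} \state{P'}{V'}$ where $X \defeqn Q$ and, since $E$ is a sequential recursive specification, $Q$ is a sequential process expression; so it remains to show the claim for sequential $P$. If $P = Seq$ is sequential, I would prove, by an inner induction on the structure of sequential process expressions (the grammar $Seq ::= Seq + Seq \mid (v=d)\rightarrow Seq \mid \lambda.Seq \mid \delta \mid \lambda.X$), that any successor is again either a sequential process expression or a process name $X$ — both of which are parallel-sequential. The base cases $\delta$ (no transitions) and $\lambda.X$ (successor $X$, which is in $\mathcal{P}_{\mathit{Par}}$) and $\lambda.Seq'$ (successor $Seq'$) are immediate; the choice and conditional cases follow from the induction hypothesis since rules \RSuml, \RSumr, \RCon{} pass the successor through unchanged. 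Finally, if $P = P_1 \pcmp P_2$, the transition arises from \RParl, \RParr, or \RComm: in the first case $P' = P_1' \pcmp P_2$ where $\state{P_1}{V}\step{\lambda}\state{P_1'}{V'}$, so by the induction hypothesis $P_1'$ is parallel-sequential and hence so is $P_1' \pcmp P_2$; symmetrically for \RParr; and for \RComm{} we get $P' = P_1' \pcmp P_2'$ with both components parallel-sequential by the induction hypothesis, so $P'$ is too. Note that in the communication case the valuation is unchanged, and in all cases the form of $V'$ is irrelevant to the shape argument.

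I do not expect a serious obstacle here — the result is essentially a syntactic closure property, and the operational rules in Table~\ref{tab:os} never introduce a parallel operator or any other construct outside the parallel-sequential fragment when reducing a parallel-sequential term. The one point requiring a little care is making sure the nesting of the two inductions is clean: the outer induction is on parallel-sequential expressions, and within the $Seq$ and $X$ cases one appeals to the separate fact about sequential expressions, which itself needs its own induction (and uses that every defining equation $X \defeqn t$ in $E$ has $t$ sequential). I would state that sequential fact as a small sublemma first, then the main lemma follows cleanly.
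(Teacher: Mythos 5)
Your proposal is correct and follows essentially the same route as the paper's proof, which argues informally that a step leaves the encapsulation and parallel structure intact, that the components are process names or sequential expressions, that process names unfold to sequential bodies, and that sequential expressions step to sequential expressions or process names. You merely make explicit the nested structural induction (and the sublemma on sequential expressions) that the paper leaves implicit.
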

\begin{proof}
	Any step made from $\partial_B(P)$ leaves the $\partial_B$ operator and the parallel composition intact. One or more of the parallel components make a step. By the structure of parallel-sequential process expressions these parallel components are either a process name or a sequential process. Since we also assume that the defining equations of every process name is a sequential process expression the process name will make a step as such. By the structure of sequential process expressions they can make a step to a sequential process expression or a process name. Hence, after any step $P$ is again a parallel composition with process names and sequential process expressions.
\end{proof}

\begin{lemma}\label{lem:consistency-3}
	For all valuations $V$ and $V'$, $\lambda \in \TL$, parallel-sequential process expressions $P$, $B \subseteq \Act$ we have that $\langle \partial_B(P),V \rangle \xrightarrow{\lambda} \langle \partial_B(P'),V' \rangle$ if and only if $\transP(P,V) \xrightarrow{\lambda} \transP(P',V')$. 
\end{lemma}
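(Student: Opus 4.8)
The plan is to prove the two directions of the bi-implication separately, in each case tracing a transition of one LTS through the layers of operators ($\allow{A_B}$, $\hide{\{check\}}$, $\comm{C_{G\gamma}}$) that surround $\trans(P,\emptyset)\pcmp Globs(d)$, and matching it against a transition of $\state{\partial_B(P)}{V}$. The bookkeeping lemmas already established do almost all the work: Lemma \ref{lem:parseq-end-in-translation} guarantees that the target of any $\alpha$-step of $\transP(P,V)$ is again of the form $\transP(P'',V')$, and Lemma \ref{lem:step-par-seq} guarantees that the target of any $\lambda$-step of $\state{\partial_B(P)}{V}$ is again $\state{\partial_B(P')}{V'}$ with $P'$ parallel-sequential, so on both sides the statement is well-typed and we only need to check that the labels and the "next states" correspond. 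I would proceed by induction on the parallel-sequential structure of $P$, i.e.\ on the number of $\pcmp$-operators, using $\trans(P_1\pcmp P_2,\emptyset)=\trans(P_1,\emptyset)\pcmp\trans(P_2,\emptyset)$ to reduce to the sequential case.

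\textbf{Base case (one sequential component).} Here a $\lambda$-step of $\state{\partial_B(P)}{V}$ comes from $\state{P}{V}$ via \REnc{} with $\lambda\notin B$; since $P$ is sequential it is an action step $\lambda=a\in\Act\setminus B$ or an assignment step $\lambda=assign(g,d')$. By Lemma \ref{lem:relation-sequential} (instantiated with $DD$ recording the guards passed, and the side condition $V(g)=\ds{d}$ for each such $d$ matching the value tracked by $Globs$), this is equivalent to $\trans(P,\emptyset)$ making a $a|checkP(d_1,true)$ (resp.\ $assignP(g,d')|checkP(d_1,true)$) step to $\trans(P',\emptyset)$ with $d_1=V(g)$. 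The $checkP$ synchronises with the $checkG(V(g),true)$ summand of $Globs(V(g))$ via $C_{G\gamma}$, producing $check$ on the first parameter and either $a$ or $assign$; $\hide{\{check\}}$ removes the $check$ component, and $\allow{A_B}$ admits the result precisely because $a\in\Act\setminus B\subseteq A_B$ or $assign\in A_B$. In the assignment case $Globs$ moves to $Globs(d')$, i.e.\ to the component corresponding to $V[g\mapsto d']$. This yields exactly $\transP(P,V)\step{\lambda}\transP(P',V')$ with the intended $V'$, and the argument is reversible: any $\lambda$-step of $\transP(P,V)$ with $\lambda\in\TL$ must, by the allow operator, arise this way, the $checkP$ being forced to communicate (as in the proof of Lemma \ref{lem:parseq-end-in-translation}), so $b=true$ and Lemma \ref{lem:relation-sequential} applies in the other direction.

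\textbf{Inductive step.} For $P=P_1\pcmp P_2$, a $\lambda$-step of $\state{\partial_B(P)}{V}$ is either an interleaving step (\RParl/\RParr, one component moves, $V$ may change only on an assignment) or a communication step (\RComm, two action components $a$ and $b$ with $\gamma(a,b)=c=\lambda$, $V$ unchanged). In the interleaving case the induction hypothesis handles the moving component and the non-moving component's translation $\trans(P_i,\emptyset)$ stays put; the single $checkP$ emitted still has to find the unique $checkG$ in $Globs$, so the value test is against $V(g)$ as before. In the communication case both $\trans(P_1,\emptyset)$ and $\trans(P_2,\emptyset)$ emit a $checkP(d_1,true)$ alongside $a$, resp.\ $b$; but $Globs$ offers the summand $checkG(d,true)\,|\,checkG(d,true)$ exactly to absorb two simultaneous checks, and $C_\gamma\subseteq C_{G\gamma}$ turns $a|b$ into $c$, after which $\hide{\{check\}}$ and $\allow{A_B}$ behave as in the base case ($c\in\Act\setminus B\subseteq A_B$, using $\lambda\notin B$). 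The handshake restriction on $\gamma$ is what ensures no summand of $Globs$ with three or more $checkG$'s is needed. Conversely, a $\lambda$-step of $\transP(P,V)$ decomposes, via the SOS of $\comm{}$, $\hide{}$, $\allow{}$ and $\pcmp$, into contributions of one or two parallel components each of the forbidden-by-allow-unless-synchronised form $\lambda_i|checkP(d_i,true)$, and reading the derivation backwards together with the induction hypothesis reconstructs the corresponding step of $\state{\partial_B(P)}{V}$.

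\textbf{Main obstacle.} The routine-looking but genuinely delicate point is the interaction of the $checkP$/$checkG$ synchronisation with the guards: one must be sure that the accumulated constraint set $\epsilon$ built up by $\trans$ along a chain of $(g=d)\rightarrow(\cdot)$ prefixes is satisfied \emph{iff} the corresponding \RCon{} side conditions $V(g)=d$ all hold, and that this is exactly captured by the Boolean second parameter of $checkP$ forcing $b=true$ before the communication with $Globs(V(g))$ can happen. This is precisely the content of Lemma \ref{lem:relation-sequential}, so the proof here is mostly a matter of invoking it at the right instantiation of $DD$; the remaining care is to confirm that the $value$ self-loops of $Globs$ never interfere (they carry label $value\notin\TL$, hence are irrelevant to a $\lambda\in\TL$ step, cf.\ Lemma \ref{lem:consistency-2}) and that nothing outside the three $checkG$-free obstacles can sneak past $\allow{A_B}$.
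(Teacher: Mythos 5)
Your proof is correct and follows essentially the same route as the paper's: the same three-way analysis of transitions (single action from one component, handshake between two components, assignment), with Lemma~\ref{lem:relation-sequential} doing the work of matching each component's contribution against a $\lambda|checkP(d_1,true)$ step, Lemma~\ref{lem:step-par-seq} and Lemma~\ref{lem:parseq-end-in-translation} guaranteeing closure of the two process classes, and Lemma~\ref{lem:seq-end-in-translation} supporting the right-to-left direction. The only difference is organisational --- you arrange the argument as an induction on the number of parallel operators whereas the paper presents a flat case distinction on the transition type --- and your version spells out in more detail how the $checkP$/$checkG$ synchronisation passes through the communication, hiding and allow layers.
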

\begin{proof}
	Both directions of the bi-implication can be proven with a case distinction on the type of transition using three cases: an action from $\Act$ stemming from one of the parallel components, a handshake stemming from two parallel components and an assignment. To prove the implication from left to right Lemma \ref{lem:relation-sequential} can be used to prove that for each contribution by a parallel component of $\langle \partial_B(P),V \rangle$ the step can be matched with an appropriate step from a parallel component of $\transP(P,V)$. Lemma \ref{lem:step-par-seq} ensures that after taking a transition we end in the translation of a parallel-sequential process again. To prove the implication from right to left Lemma \ref{lem:seq-end-in-translation} and Lemma \ref{lem:relation-sequential} can be used to prove that for each contribution by a parallel component of $\transP(P,V)$ the step can be matched with an appropriate step from a parallel component of $\langle \partial_B(P),V \rangle$.
\end{proof}

\begin{theorem}\label{thm:translation-variable-consistent}
	For all parallel-sequential process expressions $P$, valuations $V$, we have that the LTSs induced by \state{P}{V} and $\transP(P,V)$ are variable consistent.
\end{theorem}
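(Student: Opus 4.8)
The plan is to instantiate Definition~\ref{def:variable-consistent} with the obvious mapping and then discharge the three clauses with the three consistency lemmas already proved. Write $\mathcal{L}_1$ for the LTS induced by $\state{P}{V}$ and $\mathcal{L}_2$ for the one induced by $\transP(P,V)$, and define $\ell(\state{P'}{V'}) = \transP(P',V')$; this sends the initial state $\state{P}{V}$ of $\mathcal{L}_1$ to the initial state $\transP(P,V)$ of $\mathcal{L}_2$, as required. Before verifying the clauses I would record two closure facts. First, every process expression $P'$ reachable from $P$ is again parallel-sequential: this is the structural argument underlying Lemma~\ref{lem:step-par-seq} (parallel composition is preserved, and each sequential component can only step to a sequential process or to a process name whose defining equation is sequential), so $\ell$ is defined on the whole reachable fragment of $\mathcal{L}_1$. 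Second, by Lemma~\ref{lem:parseq-end-in-translation}, every state reachable from $\transP(P,V)$ is of the form $\transP(P'',V')$ for some parallel-sequential $P''$ and valuation $V'$; hence the reachable fragment of $\mathcal{L}_2$ is exactly the image of $\ell$.

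I would then establish the reachability-preservation requirement by induction on the length of a path $\state{P}{V}\step{\lambda_1}\cdots\step{\lambda_n}s_1'$ in $\mathcal{L}_1$. The base case is trivial since $\ell$ fixes the initial state. For the inductive step, if $s_1' = \state{P'}{V'}$ is reached from $\state{P''}{V''}$ by a single $\lambda$-transition, the induction hypothesis gives that $\transP(P'',V'')$ is reachable from $\transP(P,V)$, and the forward direction of Lemma~\ref{lem:consistency-3} turns the source transition $\state{P''}{V''}\step{\lambda}\state{P'}{V'}$ into $\transP(P'',V'')\step{\lambda}\transP(P',V') = \ell(s_1')$, so $\ell(s_1')$ is reachable. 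The first closure fact is what guarantees that the intermediate terms are parallel-sequential, so that Lemma~\ref{lem:consistency-3} is applicable at every step of the path.

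The three clauses of the definition now follow directly. Clause~1, that every label on an outgoing transition of a reachable state of $\mathcal{L}_2$ lies in $\TL\cup\{value(v,d)\mid v\in\Var, d\in D\}$, is Lemma~\ref{lem:consistency-1} applied to the reachable states, which by the second closure fact are all of the form $\transP(P'',V')$. Clause~2, that $\ell(\state{P'}{V'})\xrightarrow{value(v,d)}s'$ holds exactly when $V'(v)=d$ and $s' = \ell(\state{P'}{V'})$, is Lemma~\ref{lem:consistency-2}. Clause~3, the transition-for-transition correspondence for labels $\lambda\in\TL$, is precisely the biconditional of Lemma~\ref{lem:consistency-3} read through $\ell$: $\state{P'}{V'}\step{\lambda}\state{P''}{V''}$ if and only if $\ell(\state{P'}{V'})\step{\lambda}\ell(\state{P''}{V''})$.

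Since Lemmas~\ref{lem:consistency-1}--\ref{lem:consistency-3} already carry the semantic weight, the remaining difficulty is purely organisational: one must ensure that the reachable fragments on the two sides are in exact correspondence under $\ell$, so that the lemmas can be invoked uniformly along every reachable path. The two closure facts are exactly what close this gap --- Lemma~\ref{lem:parseq-end-in-translation} prevents spurious reachable states from appearing on the mCRL2 side, and the structural argument of Lemma~\ref{lem:step-par-seq} keeps every reachable source term inside $\mathcal{P}_{Par}$ --- after which the three clauses combine to yield that $\mathcal{L}_2$ is variable-consistent with $\mathcal{L}_1$. I expect this bookkeeping, rather than any genuinely new argument, to be the main thing to get right.
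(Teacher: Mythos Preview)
Your proposal is correct and follows essentially the same route as the paper: define $\ell(\state{P'}{V'})=\transP(P',V')$ and discharge the three clauses by Lemmas~\ref{lem:consistency-1}, \ref{lem:consistency-2} and~\ref{lem:consistency-3} (the paper pairs the last with Lemma~\ref{lem:parseq-end-in-translation}). Your write-up is more careful than the paper's terse proof about the bookkeeping you rightly flag---that $\ell$ is well-defined on reachable states via Lemma~\ref{lem:step-par-seq} and that reachability is preserved---which the paper leaves implicit.
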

\begin{proof}
	For every state $\state{P'}{V'}$ reachable from \state{P}{V} we define $\linkState{\state{P'}{V'}} = \transP(P',V')$. Lemma \ref{lem:consistency-1} proves condition 1, Lemma \ref{lem:consistency-2} proves condition 2 and Lemma \ref{lem:consistency-3} together with Lemma \ref{lem:parseq-end-in-translation} proves condition 3.  
\end{proof}

\begin{corollary}\label{cor:relation-sbb-sb}
	For all parallel-sequential process expressions $P,Q$ and valuations $V_1,V_2$ we have that $\state{P}{V_1} \bisim{sb} \state{Q}{V_2}$ if and only if $\transP(P,V_1) \bisim{} \transP(Q,V_2)$.
\end{corollary}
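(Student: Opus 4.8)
The plan is to derive the corollary from the variable-consistency result of Theorem~\ref{thm:translation-variable-consistent}, by transporting bisimulations back and forth along the link map, which here is simply $\transP(\cdot,\cdot)$. The picture is that, on reachable states, the mCRL2 transition system is a faithful copy of the source one in which every $\TL$-labelled transition is preserved and reflected (Lemma~\ref{lem:consistency-3}), the only extra behaviour being $value(g,d)$-labelled self-loops that expose exactly the current valuation (Lemmas~\ref{lem:consistency-1} and~\ref{lem:consistency-2}), and in which every state reached from a translation is again the translation of a parallel-sequential process (Lemmas~\ref{lem:parseq-end-in-translation} and~\ref{lem:step-par-seq}). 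Since state-based bisimilarity (Definition~\ref{def:strong-data-bisimilarity}) forces related states to carry the same valuation, and that valuation is recoverable from the mCRL2 state through its $value$-loops, the two equivalences ought to coincide. I would prove the two implications separately, each time exhibiting an explicit witnessing relation obtained by applying, respectively removing, $\transP$.

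For the implication from left to right, assume $\state{P}{V_1}\bisim{sb}\state{Q}{V_2}$ and fix a state-based bisimulation $\bisimR{sb}$ with $(\state{P}{V_1},\state{Q}{V_2})\in\bisimR{sb}$; by Definition~\ref{def:strong-data-bisimilarity} this already gives $V_1=V_2$ and equal valuations on both sides of every pair in $\bisimR{sb}$. I would show that
$$\bisimR{} = \{(\transP(P',V'),\transP(Q',V')) \mid (\state{P'}{V'},\state{Q'}{V'})\in\bisimR{sb}\}$$
is a strong bisimulation (Definition~\ref{def:strong-bisimilarity}). Take such a pair and a transition $\transP(P',V')\step{\alpha}R$; by Lemma~\ref{lem:consistency-1}, $\alpha\in\TL\cup\{value(g,d)\mid d\in D\}$. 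If $\alpha=value(g,d)$, Lemma~\ref{lem:consistency-2} gives $d=V'(g)$ and $R=\transP(P',V')$, and that same lemma yields $\transP(Q',V')\step{value(g,d)}\transP(Q',V')$, so the resulting pair is the one we started from and lies in $\bisimR{}$. If $\alpha=\lambda\in\TL$, then by Lemma~\ref{lem:parseq-end-in-translation} we have $R=\transP(P'',V'')$ with $P''$ parallel-sequential, Lemma~\ref{lem:consistency-3} gives $\state{P'}{V'}\step{\lambda}\state{P''}{V''}$, the bisimulation $\bisimR{sb}$ supplies $\state{Q''}{V''}$ with $\state{Q'}{V'}\step{\lambda}\state{Q''}{V''}$ and $(\state{P''}{V''},\state{Q''}{V''})\in\bisimR{sb}$, and Lemma~\ref{lem:consistency-3} once more gives $\transP(Q',V')\step{\lambda}\transP(Q'',V'')$ with the target pair in $\bisimR{}$. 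The symmetric clause is identical, so $\transP(P,V_1)\bisim{}\transP(Q,V_2)$.

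For the converse, assume $\transP(P,V_1)\bisim{}\transP(Q,V_2)$. I would first establish $V_1=V_2$: by Lemma~\ref{lem:consistency-2} the only $value$-labelled transition out of $\transP(P,V_1)$ is the self-loop labelled $value(g,V_1(g))$, and likewise $value(g,V_2(g))$ for $\transP(Q,V_2)$, so if $V_1(g)\neq V_2(g)$ one side could perform a $value$ step the other cannot, contradicting bisimilarity (for several global variables one repeats this per variable, using the generalised $Globs$ of Section~\ref{sec:discussion}). Then, fixing a strong bisimulation $\bisimR{}$ with $(\transP(P,V_1),\transP(Q,V_2))\in\bisimR{}$, I would verify that
$$\bisimR{sb} = \{(\state{P'}{V'},\state{Q'}{V'}) \mid P',Q' \text{ parallel-sequential and } (\transP(P',V'),\transP(Q',V'))\in\bisimR{}\}$$
is a state-based bisimulation; it contains $(\state{P}{V_1},\state{Q}{V_2})$ since $V_1=V_2$, and membership forces equal valuations by the same $value$-loop argument, so the valuation clause of Definition~\ref{def:strong-data-bisimilarity} holds. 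For a transition $\state{P'}{V'}\step{\lambda}\state{P''}{V''}$ (necessarily $\lambda\in\TL$, as the source LTS has no $value$ transitions), Lemma~\ref{lem:step-par-seq} keeps $P''$ parallel-sequential, Lemma~\ref{lem:consistency-3} lifts the step to $\transP(P',V')\step{\lambda}\transP(P'',V'')$, the bisimulation $\bisimR{}$ matches it with a transition $\transP(Q',V')\step{\lambda}S$ which by Lemma~\ref{lem:parseq-end-in-translation} has $S=\transP(Q'',V'')$, and Lemma~\ref{lem:consistency-3} reflects this back to $\state{Q'}{V'}\step{\lambda}\state{Q''}{V''}$ with the target pair in $\bisimR{sb}$; both sides reach valuation $V''$, so this clause is met, and the symmetric clause is analogous.

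The step I expect to demand the most care is the bookkeeping around the $value$ self-loops and the image of $\transP$. In the left-to-right direction one must check that these extra transitions never break the bisimulation, which works precisely because state-based bisimilarity keeps valuations synchronised, so the self-loops match trivially; in the right-to-left direction the same self-loops are exactly what lets us recover $V_1=V_2$, a fact strong bisimilarity does not provide directly. One also has to be sure that after any transition in the mCRL2 LTS we remain inside the image of $\transP$, for which Lemma~\ref{lem:parseq-end-in-translation} (together with Lemma~\ref{lem:step-par-seq} on the source side) is essential, so that the candidate relations are well defined and closed under transitions. The rest is routine, and the argument is uniform in the number of global variables, the only difference being that several $value$ actions (or a single $value$ action with several parameters) then expose the complete valuation.
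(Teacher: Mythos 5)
Your proof is correct and follows essentially the same route as the paper: the paper's own argument is a terse appeal to variable consistency (Theorem~\ref{thm:translation-variable-consistent}), noting that the only gap between state-based and strong bisimilarity is the valuation equality, which condition~2 recovers from the $value$-labelled self-loops. You simply spell out the transported bisimulation relations and the closure arguments (via Lemmas~\ref{lem:consistency-1}--\ref{lem:consistency-3} and~\ref{lem:parseq-end-in-translation}) that the paper leaves implicit.
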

\begin{proof}
	This follows immediately from the definition of variable consistency. The difference between state-based bisimilarity and strong bisimilarity is only that state-based bisimilarity requires that the valuation in states is equal. By condition 2 of variable consistency the valuations $V_1$ and $V_2$ are equal if and only if $\transP(P,V_1)$ and $\transP(Q,V_2)$ have the same $value$ labelled self-loops on states.
\end{proof}

\begin{theorem}\label{thm:satisfy-same-formula}
	Let $(S,\TL{},\xrightarrow{},s)$ be an LTS where $S = \mathcal{P}\times \mathcal{V}$, let $(S',\TL{}',\xrightarrow{}',s')$ be an LTS where $S' = \mathcal{P}_{mCRL2}$ and let these two LTSs be variable consistent. A HML$^{check}$ formula $\phi$ holds in some state $\langle P,V \rangle \in S$ if and only if \fLogic{\phi} holds in $\linkState{\langle P,V \rangle} \in S'$.
\end{theorem}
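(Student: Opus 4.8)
The plan is to prove this by structural induction on the HML$^{check}$ formula $\phi$, showing that for every state $\langle P, V \rangle \in S$ we have $\langle P, V \rangle \in \ds{\phi}$ (in $\mathcal{L}_1$) if and only if $\linkState{\langle P,V \rangle} \in \ds{\fLogic{\phi}}$ (in $\mathcal{L}_2$). Because the diamond and box cases quantify over successor states, the induction hypothesis must be available not just for $\langle P, V \rangle$ itself but for every reachable state; so I would first observe that $\ell$ maps states reachable from $s$ to states reachable from $s'$ (part of variable consistency), and that conditions 1 and 3 of Definition~\ref{def:variable-consistent} let us restrict attention, in $\mathcal{L}_2$, to transitions labelled either by a label in $\TL$ (which are in exact $\linkState{\cdot}$-correspondence with transitions in $\mathcal{L}_1$) or by a $value(v,d)$ self-loop (which does not change the state and whose presence encodes exactly $V(v)=d$). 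It is cleanest to phrase the induction over \emph{all} reachable states simultaneously, i.e. prove $\forall \langle P,V\rangle$ reachable: $\langle P,V\rangle \in \ds{\phi} \iff \linkState{\langle P,V\rangle}\in\ds{\fLogic{\phi}}$, by induction on $\phi$.

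The base cases $true$, $false$ are immediate from the semantic clauses. The case $(v = e)$ is the crux of why the translation works: by definition $\langle P,V\rangle \in \ds{v=e}$ iff $V(v)=e$, and by condition~2 of variable consistency this holds iff $\linkState{\langle P,V\rangle} \xrightarrow{value(v,e)} \linkState{\langle P,V\rangle}$, i.e. iff there is a $value(v,e)$-successor of $\linkState{\langle P,V\rangle}$, which by the semantic clause for $\diamondOp{\cdot}$ is precisely $\linkState{\langle P,V\rangle} \in \ds{\diamondOp{value(v,e)}true} = \ds{\fLogic{v=e}}$. The Boolean cases $\neg\phi$, $\phi_1\wedge\phi_2$, $\phi_1\vee\phi_2$ follow directly from the induction hypothesis and the fact that $\fLogic{\cdot}$ commutes with the Boolean connectives (and that $\ds{\cdot}$ interprets them by complement/intersection/union), using that $\linkState{\cdot}$ is a function so complementation is respected pointwise.

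The modal cases are where variable consistency does the real work. For $\diamondOp{T}\phi$: if $\langle P,V\rangle \xrightarrow{\lambda}\langle P',V'\rangle$ with $\lambda\in T$ and $\langle P',V'\rangle\in\ds{\phi}$, then since $\lambda\in\TL$, condition~3 gives $\linkState{\langle P,V\rangle}\xrightarrow{\lambda}\linkState{\langle P',V'\rangle}$, and $\langle P',V'\rangle$ is reachable so the induction hypothesis gives $\linkState{\langle P',V'\rangle}\in\ds{\fLogic{\phi}}$; since $\fLogic{\diamondOp{T}\phi}=\diamondOp{T}\fLogic{\phi}$ and $T\subseteq\TL$, we conclude $\linkState{\langle P,V\rangle}\in\ds{\fLogic{\diamondOp{T}\phi}}$. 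Conversely, a $\lambda$-transition from $\linkState{\langle P,V\rangle}$ with $\lambda\in T\subseteq\TL$: here I must note that $\lambda$ cannot be a $value$-label (since $value$-labels are not in $\TL$, and $T$ is a set of $\TL$-labels), and that by Lemma~\ref{lem:parseq-end-in-translation} the target is again of the form $\linkState{\langle P',V'\rangle}$; then condition~3 backwards yields the matching transition in $\mathcal{L}_1$ and the induction hypothesis closes the case. The box case is dual, with the one subtlety that in $\mathcal{L}_2$ the universal quantifier ranges over \emph{all} $\lambda\in T$-successors, and since $T$ contains no $value$-labels and, by Lemma~\ref{lem:parseq-end-in-translation}, every $\TL$-labelled successor of $\linkState{\langle P,V\rangle}$ is $\ell$ of an $\mathcal{L}_1$-successor, the two quantifications align exactly.

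The main obstacle, and the place to be careful, is precisely this alignment of successor sets in the modal cases: one must be sure that $\ell$ restricted to reachable states is a \emph{bijection} onto the reachable part of $\mathcal{L}_2$ that is relevant for $\TL$-labelled transitions — injectivity so that no spurious extra successors appear on the mCRL2 side, and the ``ends in a translated state'' Lemmas~\ref{lem:parseq-end-in-translation} and \ref{lem:step-par-seq} so that every $\TL$-successor on the mCRL2 side is hit. Variable consistency as stated gives the forward simulation and the $value$-loop characterisation; combined with those lemmas it gives the needed tight correspondence, and the $value$-self-loops must be explicitly excluded from the modal quantifications since $T$ and the $\TL$-labels of transitions considered in conditions~1 and~3 never include $value(v,d)$. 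Once this bookkeeping is in place, the induction goes through routinely.
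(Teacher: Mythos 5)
Your proposal is correct and follows essentially the same route as the paper: structural induction on $\phi$, using condition~2 of variable consistency for the $(v=e)$ case, condition~3 for the diamond and box cases, and condition~1 to bound the transitions of $\linkState{\langle P,V\rangle}$ in the box case. Your extra care about the alignment of successor sets (that every $\TL$-labelled successor on the mCRL2 side is in the image of $\ell$) is a point the paper's own sketch glosses over, though note that the lemma you invoke to close it is specific to the translation rather than a consequence of the general variable-consistency hypotheses under which the theorem is stated.
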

\begin{proof}
The proof is by induction on the structure of $\phi$ with the induction hypothesis that any subformula $\phi'$ of $\phi$ holds for $\langle P,V \rangle \in S$ if and only if \fLogic{\phi'} holds for $\linkState{\langle P,V \rangle} $. In the case $\phi = (v=e)$ condition 2 of variable consistency is necessary to relate the valuation in a state and the $value$ labelled selfloops. Condition 3 of variable consistency is needed for the case $\phi = \langle T \rangle \phi'$ and $\phi = [ T ] \phi'$ to show that transitions can be mimicked. Furthermore, in the case $\phi = [ T ] \phi'$ we also need condition 1 of variable consistency to show that \linkState{\langle P,V \rangle} does not have more $\lambda$ labelled transitions.
\end{proof}

\section{Discussion}\label{sec:discussion}
For the encoding in mCRL2 and subsequent correctness proofs we have made the assumption that there is only one global variable, which is rather restrictive. To generalize the translation to handle any number of global variables we would need to adjust the following. The $Globs$ process should be adjusted to track more global variables by making the parameter of the process a mapping from variable names to values. Upon performing an $assignG(v,d)$ action $Globs$ should update the mapping such that $v$ maps to $d$. To communicate the values of global variables in a $check(d_1,\dots,d_n,true)$ action we need an ordering on the global variables: $d_1$ is given the value of variable one, $d_2$ is given the value of variable two, etcetera. The condition determining the last parameter of the $checkP$ action should also be adjusted to use this ordering, e.g. when $\trans$ gathers a requirement $(v,d)$ and variable $v$ is the $i$th variable then the condition in $checkP$ should include a conjunct $d_i \approx d$.

We intend to continue researching process algebras with global variables. One research direction is to extend mCRL2 with global variables. The simple process algebra presented in this paper only allows for very simple conditions on global variables: checking whether a variable has a specific value. If global variables could be integrated into mCRL2 we could use its powerful data language to specify complex conditions. We would also like to research scoped shared variables, including creation and scope extrusion.

\section{Conclusion}\label{sec:conclusion}
In this paper we have presented a simple process calculus with global variables and studied various aspects of it. To start we examined appropriate notions of equivalence: stateless bisimulation for process expressions and state-based bisimulation for states. Then, for our first contribution we presented a logic extending HML with a check and a set operator and proved that HML$^{check}$ is strong enough to differentiate states that are not state-based bisimilar and HML$^{check+set}$ is strong enough to differentiate process expressions that are not stateless bisimilar. Finally, for our second contribution we give a translation to mCRL2, using the multi-action concept, preserving HML$^{check}$ formulas. Translating to mCRL2 allows us to reuse the already existing tools. The translation mostly preserves the syntactic structure and, in particular, the parallel composition (adding one extra parallel process).

When analysing whether a distributed system satisfies a liveness property, it is necessary to define through a completeness criterion which runs of the system should be considered in the analysis. Recently, \emph{justness} was proposed as a suitable completeness criterion that takes into account the component structure of the system  \cite{GH15} and excludes unrealistic runs. Modelling shared variables as separate components hampers a straightforward definition of justness \cite{DGH17,BLW20}. Since global variables need not be modelled as separate components in the process algebra proposed in Section~\ref{sec:GlobalVarDef}, it may facilitate a more elegant analysis of liveness properties under justness assumptions for distributed systems that rely on shared variables for the communication between components.

\subsection*{Acknowledgements}
For the presentation of the semantics of mCRL2, and in particular for the semantics of multi-actions, we have benefited from work by Maurice Laveaux. We would also like to extend our gratitude to the anonymous reviewers. Their comments led to Corollary \ref{cor:relation-sbb-sb}.
    
\bibliographystyle{eptcs}
\bibliography{bibliography.bib}

\end{document}